\documentclass[conference,onecolumn,11pt]{IEEEtran}
\IEEEoverridecommandlockouts

\usepackage{geometry}  
\geometry{marginparwidth=1in}

\usepackage{amsmath,amssymb,amsfonts}

\usepackage[T1]{fontenc}  
\usepackage[utf8]{inputenc}
\usepackage{graphicx}     
\usepackage{url}          
\usepackage{amssymb,amsthm,amsmath}
\usepackage{mathtools}
\usepackage{bbold}
\usepackage{enumitem}
\usepackage{algorithm}
\usepackage{algpseudocode}
\usepackage{todonotes}
\usepackage{booktabs}
\usepackage{multicol}

\usepackage{tikz}
\usetikzlibrary{arrows}
\usetikzlibrary{positioning}
\usetikzlibrary{decorations.pathreplacing}
\usetikzlibrary{fit}
\usetikzlibrary{backgrounds}
\usetikzlibrary{automata}
\usetikzlibrary{positioning}
\usetikzlibrary{arrows}
\usetikzlibrary{calc}
\usetikzlibrary{decorations.markings}
\usetikzlibrary{shapes.geometric}
\usepackage[percent]{overpic}

\definecolor{UmUBlue}{RGB}{42,71,101}
\definecolor{UmUGreen}{RGB}{115,167,144}
\definecolor{UmUGold}{RGB}{215,177,124}
\definecolor{UmUPink}{RGB}{234,186,185}
\definecolor{NoUmUColour}{RGB}{84,142,202}

\newtheorem{theorem}{Theorem}
\newtheorem{lemma}[theorem]{Lemma}
\newtheorem{corollary}[theorem]{Corollary}
\newtheorem{definition}[theorem]{Definition}
\newtheorem{example}[theorem]{Example}

\newcounter{cl}
\newenvironment{claim}{%
  \refstepcounter{cl}%
  \par\medskip\noindent%
  \textit{Claim \thecl.}
}{\par\medskip}

\newcounter{cse}
\newenvironment{firstcase}{%
  \setcounter{cse}{0}%
  \refstepcounter{cse}%
  \par\medskip\noindent%
  \textit{Case \thecse.}
}{\par\medskip}

\newenvironment{case}{%
  \refstepcounter{cse}%
  \par\medskip\noindent%
  \textit{Case \thecse.}
}{\par\medskip}

\newcommand{\nat}{\mathbb N}
\newcommand{\pow}[1]{\wp(#1)}

\newcommand{\emptystr}{\varepsilon}
\newcommand{\ordo}[1]{O(#1)}
\newcommand{\seq}[1]{#1^\circledast}

\newcommand{\funion}{\sqcup}
\newcommand{\types}{\mathcal T}
\newcommand{\sig}{\Sigma}
\newcommand{\trees}[1]{\mathrm T_{#1}}
\newcommand{\addr}[1]{\mathit{addr}(#1)}
\newcommand{\ad}{\alpha}
\newcommand{\alg}{\mathcal A}
\newcommand{\A}{\mathbb A}
\newcommand{\val}{\mathit{val}}

\newcommand{\emptygraph}{\phi}
\newcommand{\labels}{\mathbb L}
\newcommand{\ndlab}{\!\dot{\,\mathbb L}}
\newcommand{\edlab}{\bar{\mathbb L}}
\newcommand{\lab}{\mathit{lab}}
\newcommand{\port}{\mathit{port}}
\newcommand{\dock}{\mathit{dock}}
\newcommand{\src}[1]{\mathit{src}(#1)}
\newcommand{\tar}[1]{\mathit{tar}(#1)}
\newcommand{\union}[2]{\uplus_{#1#2}}
\newcommand{\clone}[2]{\mathit{clone}_{#1}(#2)}
\newcommand{\context}{\mathit{CONT}}
\newcommand{\New}{\mathit{NEW}}
\newcommand{\graphs}{\mathbb G}
\newcommand{\G}{\mathcal G}
\newcommand{\und}[1]{\underline{#1}}
\newcommand{\ext}{\Phi}
\newcommand{\reach}[2]{#1\mathbin{\triangledown} #2}

\newcommand{\prt}[1]{{\mathversion{bold}\scriptsize\color{UmUBlue}$#1$}}
\newcommand{\dk}[1]{{\mathversion{bold}\scriptsize\color{UmUGreen}$(#1)$}}

\newcommand{\profile}{\mathit{profile}}%
\newcommand{\profiles}{\Pi}%
\newcommand{\mult}{\mathit{mult}}%
\newcommand{\greater}[1]{\mathit{gt}_{#1}}%
\newcommand{\equal}[1]{\mathit{eq}_{#1}}%

\theoremstyle{definition}
\newtheorem{myexample}[theorem]{Example}

\begin{document}

\pagestyle{plain}

\title{Polynomial Graph Parsing with\\ Non-Structural Reentrancies
\thanks{This work has been supported in part by the Swedish Research Council under Grant No~2020-03852.}}

\author{
\IEEEauthorblockN{Johanna Bj\"orklund}
\IEEEauthorblockA{
\textit{Dept. Computing Science} \\
\textit{Umeå University}\\
Umeå, Sweden \\
johanna@cs.umu.se}
\and
\IEEEauthorblockN{Frank Drewes}
\IEEEauthorblockA{
\textit{Dept. Computing Science} \\
\textit{Umeå University}\\
Umeå, Sweden \\
drewes@cs.umu.se}
\and
\IEEEauthorblockN{Anna Jonsson}
\IEEEauthorblockA{\textit{Dept. Computing Science} \\
\textit{Umeå University}\\
Umeå, Sweden \\
aj@cs.umu.se}}


\maketitle

\begin{abstract}
Graph-based semantic representations are valuable in natural language processing, where it is often simple and effective to represent linguistic concepts as nodes, and relations as edges between them. Several attempts has been made to find a generative device that is sufficiently powerful to represent languages of semantic graphs, while at the same allowing efficient parsing. We add to this line of work by introducing graph extension grammar, which consists of an algebra over graphs together with a regular tree grammar that generates expressions over the operations of the algebra. 
Due to the design of the operations, these grammars can generate graphs with \emph{non-structural reentrancies}, a type of node-sharing that is excessively common in formalisms such as abstract meaning representation, but for which existing devices offer little support. We provide a parsing algorithm for graph extension grammars, which is proved to be correct and run in polynomial time. 
\end{abstract}

\begin{IEEEkeywords}
Algorithmic Graph Theory, 
Semantic Representations, 
Hyperedge Replacement Grammars, 
Natural Language Processing 
\end{IEEEkeywords}

\section{Introduction}\label{sec:intro}
Semantic representations are a key component in algorithmic solutions for high-level language processing. Word embeddings such as BERT~\cite{Devlin:2019} and GPT-3~\cite{GPT3} have become a popular choice of representation, enabling rapid progress on many tasks. 
Due to the descriptive power of the neural networks that generate them, they rarely submit to analysis by deductive means. 
However, for some applications the transparency gained by formal methods is a must. 
In such cases, we believe that graph-based semantic representations offer a useful complement. 
These generally represent objects as nodes, and relations as directed edges. Probabilities and other weights can be added to reflect quantitative aspects such as likelihoods and uncertainties. 

In this work, we model languages of graph-based semantic representations as the combination of a graph algebra $\alg$ together with a tree grammar $g$ whose generated trees are well-formed expressions over the operations of $\alg$. 
Each tree translates into a graph when evaluated by $\alg$, meaning that the tree language generated by $g$ evaluates to a graph language. 
If we are careful about how we construct and combine $g$ and $\alg$, we can make parsing efficient. 
In other words, given a graph $G$, we can in polynomial time find a tree in the language of $g$ that evaluates to $G$ under $\alg$ (or decide that no such tree exists). 
In the literature, this approach is known as tree-based generation~\cite{drewes:06}. 
The novelty in the present work lies in the design of the  algebra. 

For inspiration, we look at a particular instance of graph-based representations, namely abstract meaning representation (AMR)~\cite{Langkilde:1998,banarescu:2013}. 
It is characterised by its graphs being directed, acyclic, and having unbounded node degree. AMR was first introduced by~\cite{Langkilde:1998} based on a semantic abstraction language by~\cite{Kasper:1989}. 
The notion was refined and popularised by~\cite{banarescu:2013} and instantiated for a limited domain by~\cite{braune:2014}. 
To ground AMR in formal language theory, Chiang et al. \cite{Chiang:2018} analyse the AMR corpus of Banarescu et al~\cite{banarescu:2013}. 
They note that even though the node degree is generally low in practice, this is not always the case, which speaks in favour of models that allow an unbounded node degree. 
Regarding the treewidth of the graphs in the corpus, they find that it never exceeds~4 and conclude that an algorithm can depend exponentially on this parameter and still be feasible in practice.


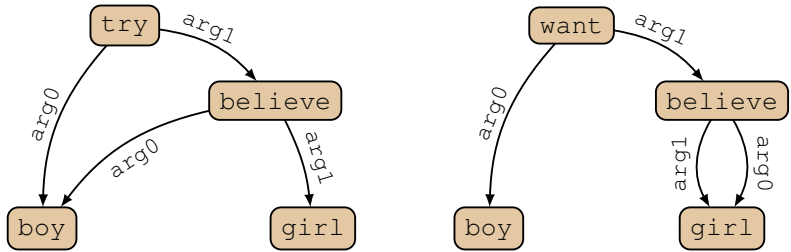
\begin{figure}[t]
    \centering
        \tikzset{
  auto,
  thick, 
  every node/.style={line width=0.25mm, rounded corners, font=\ttfamily\footnotesize,inner sep=2pt, outer sep=0pt}, 
  terminal/.style={line width=0.25mm,rounded corners,draw, fill=UmUGold!70, inner sep=4pt,font=\ttfamily\small},
  every label/.style={draw=none, fill=White, inner sep=0pt, outer sep=0pt}
}
\centering
\begin{tikzpicture}
   \node[terminal]  (b)  at (-.6,-1.5) {boy}; 
   \node[terminal]  (w1) at (.5,1.2) {try}; 
   \node[terminal]  (b1) at (2.5,.2) {believe};
   \node[terminal]  (g)  at (3,-1.5) {girl};
    \path[-latex]
    (w1) edge [line width=0.25mm,-latex,bend right=20,pos=0.5, above, swap,sloped]     node {arg0} (b)
         edge [line width=0.25mm,-latex,bend left=20,pos=0.4,sloped]           node {arg1} (b1)
    (b1) edge [line width=0.25mm,-latex,swap, bend right=20,sloped] node[outer sep=1pt] {arg0} (b)
         edge [line width=0.25mm,-latex,bend left=10,sloped]      node[outer sep=1pt] {arg1} (g);
\end{tikzpicture}
\qquad\quad
\begin{tikzpicture}
   \node[terminal]  (b)  at (-.6,-1.5) {boy}; 
   \node[terminal]  (w1) at (.5,1.2) {want}; 
   \node[terminal]  (b1) at (2.5,.2) {believe};
   \node[terminal]  (g)  at (2.5,-1.5) {girl};
    \path[-latex]
    (w1) edge [line width=0.25mm,-latex,bend right=20,pos=0.5, above, swap,sloped]     node {arg0} (b)
         edge [line width=0.25mm,-latex,bend left=20,pos=0.4,sloped]           node {arg1} (b1)
    (b1) edge [line width=0.25mm,-latex,swap, bend left=30, above, sloped] node[outer sep=1pt] {arg0} (g)
         edge [line width=0.25mm,-latex,bend right=30,sloped]      node[outer sep=1pt] {arg1} (g);
\end{tikzpicture}
        \caption{AMR graphs for the pair of sentences  ``The boy tries to believe the girl'' and ``The boy wants the girl to believe herself''. The nodes represent the concepts `boy', `girl', `try', `want', and `believe'. The edges labelled `arg0' give the agent for each verb, and the edges labelled `arg0' the patient. }
    \label{fig:reent}
\end{figure}

In the context of semantic graphs, it is common to talk about \emph{reentrancies}. 
Figure~\ref{fig:reent} illustrates this concept with a pair of AMR graphs, both of which require node sharing, or \emph{reentrant edges}, to express the correct semantics. 
We propose to distinguish between two types of reentrancies: 
In the first graph, the reentrancy is \emph{structural}, meaning that the boy must be the agent of whatever he is trying, so the `arg0' edge can only point to him. 
In the second graph, the reentrancy is \emph{non-structural} in the sense that although in this particular case the girl believes in herself, there is nothing to prevent her believing in someone else, or that someone else believes in her.  
In general, we speak of (i) structural reentrancies when they are syntactically governed, e.g.,~by control or coordination, and of (ii) non-structural reentrancies when they represent coreferences which can in principle refer to any antecedent, essentially disregarding the structure of the graph.  
Another example of (i) is subject control as in ``They persuaded him to talk to her'', where the person who talks and the person who was persuaded must be one and the same. 
In contrast, ``[\ldots], but she liked them'' is an example of (ii) since antecedent of ``them'' may be picked from anywhere in ``[\ldots]''  for the semantic representation to be  valid.

Another important characteristic of AMR and other notions of semantic graphs in natural language processing is that certain types of edges, like those in Figure~\ref{fig:reent}, point to \emph{arguments} of which there should be at most one. 
For example, the concept `try' in Figure~\ref{fig:reent} must only have one outgoing `arg0' edge. 
Other edges, of types not present in Figure~\ref{fig:reent}, can for example represent modifiers, such as believing \emph{strongly} and \emph{with a passion}. 
Outgoing edges of these types are usually not limited in number.

We have previously shown that contextual hyperedge-replacement grammars (CHRGs), an  extension of the well-known context-free hyperedge-replacement grammars (HRGs), can handle both structural and non-structural reentrancies~\cite{drewes:2017}. 
Moreover, a parser generator has been provided by Drewes, Hoffmann, and Minas for CHRGs that, when it succeeds in producing a parser, guarantees that the parser will run in quadratic (and in the common case, linear) time in the size of its input graph~\cite{Drewes-Hoffmann-Minas:19,Drewes-Hoffmann-Minas:21}. 
However, similarly to common LL- and LR-parsers for context-free languages the parser generator may discover a parsing conflict, thus failing to output a parser. 
Compared to the string case, the possible reasons for conflicts are much subtler, which makes grammar construction a complex and error-prone endeavour.

For this reason, we are now proposing a CHRG variant that by construction allows for polynomial parsing while retaining the ability to specify graph languages with reentrancies of both type~(i) and~(ii). 
One key property of these grammars that enables polynomial parsing is that, intuitively, nodes are provided with all of their outgoing edges the moment they are created. 
Using ordinary contextual hyperedge replacement, this would thus result in graph languages of bounded out-degree, the bound being given by the maximal number of outgoing edges of the nodes in the right-hand sides of production rules. 
To be able to generate semantic graphs in which a concept can have an arbitrary number of edges representing modifiers, we use a technique from~\cite{DrewesEtAl:2010} known as \emph{cloning}. 
To incorporate both contextuality and cloning in a natural way without sacrificing efficient parsability, we formalise our grammars using the concept of \emph{tree-based graph generation}. 
While tree-based hyperedge-replacement grammars are well-known to be equivalent to ordinary ones (see below), the tree-based formulation does make a difference in the contextual case as it avoids the problem of cyclic dependencies that has not yet been fully characterised and can make parsing intractable~\cite{DrewesHoffmannMinas:2019psrjournal}.

\subsection{Related work}
Tree-based generation dates back to the seminal paper by Mezei and Wright~\cite{Mezei-Wright:67} which generalises context-free languages to languages over arbitrary domains, by evaluating the trees of a regular tree language with respect to an algebra.\footnote{Mezei and Wright formulate this in terms of systems of equations, but the essential ideas are the same.}
Operations on graphs were used in this way for the first time by Bauderon and Courcelle~\cite{Bauderon:1987}. See the textbook~\cite{Courcelle-Engelfriet:12} by Courcelle and Engelfriet for the eventual culmination of this line of work. 
Graph operations similar to those used here (though without the contextual extension) appeared first in Courcelle's work on the monadic-second order logic of graphs~\cite[Definition~1.7]{Courcelle:91b}. 
Essentially, if the right-hand side of a production contains $k$ nonterminal hyperedges, it is viewed as an operation that takes $k$ hypergraphs as arguments (corresponding to the hypergraphs generated by the $k$~subderivations) and returns the hypergraph obtained by replacing the nonterminals in the right-hand side by those $k$ argument hypergraphs. 
By context-freeness, evaluating the tree language that corresponds to the set of derivation trees of the grammar (which is a regular tree language) yields the same set of hypergraphs as is generated by the original HRG.

Several formalisms have been put forth in the literature to describe graph-based semantic representations in general and AMR in particular. 
Most of these can be seen as variations of HRGs; see \cite{HabelKreowski:1987,Bauderon:1987,drewes:1997}. 
It was established early that unrestricted HRGs can generate NP-complete graph languages~\cite{aalbersberg:1986,lange:1987}, so restrictions are needed to ensure efficient parsing. 
To this end, Lautemann~\cite{Lautemann:1990} proposed a CYK-like membership algorithm and proved that it runs in polynomial time provided that the language satisfies the following condition: for every graph $G$ in the language, the number of connected components obtained by removing $s$ nodes from $G$ is in $\ordo{\log n}$, where $n$ is the number of nodes of $G$ and the constant $s$ depends on the grammar. 
Lautemann's algorithm was refined by Chiang et al.~\cite{chiang:2013} to make it more suitable for NLP tasks, but the algorithm is exponential in the node degree of the input graph.


In a parallel line of work, Quernheim and Knight \cite{quernheim-knight-2012-towards} define automata for directed acyclic graphs (DAGs) to process feature structures in machine translation. 
Chiang et al. \cite{Chiang:2018} propose an extended model of DAG automata, focusing on semantic representations such as AMR. 
Towards this end, they extend DAG automata by allowing the left- and right-hand sides to have the form of a restricted regular expression.  
Blum and Drewes~\cite{Blum-Drewes:17} complement this work by studying  language-theoretic properties of the DAG automata. 
They establish, among other things, that equivalence and emptiness are decidable in polynomial time. 

Various types of hyperedge replacement graph algebras for AMR parsing are described in the work by Groschwitz et al., see, e.g. \cite{Groschwitz-Koller-Teichmann:15,groschwitz-etal-2017-constrained,Groschwitz.etAl:18,Lindemann.etAl:19,Lindemann.etAl:20}. 
A central objective is to find linguistically motivated restrictions that can efficiently be trained from data. 
An algorithm based on such an algebra for translating strings into semantic graphs was presented in~\cite{Groschwitz.etAl:18}: operations of arity zero denote graph fragments, and operations of arity two denote binary combinations  of graph fragments into larger graphs. 
The trees over the operations of this algebra mirror the compositional structure of semantic graphs. 
The approach differs from ours in that the graph operations are entirely deterministic, and that neither context nodes nor cloning are used.
Moreover, as is common in computational linguistics, evaluation is primarily empirical.

Through another set of syntactic restrictions on HRGs, Björklund et al.~\cite{Bjorklund.etAl:21b,Bjorklund:2019b} arrive at order-preserving dag grammars (OPDGs) -- with or without weights -- which can be parsed in linear time. 
Intuitively, the restrictions combine to ensure that each generated graph can be uniquely represented by a tree in a particular graph algebra. 
Despite their restrictions, OPDGs can describe central structural properties of AMR, but their limitation lies in the modelling of reentrancies. 
Out of the previously discussed types of reentrancies, type (i) can to a large extent be modelled using OPDGs. 
Modelling type (ii) cannot be done (except in very limited cases) since it requires attaching edges to the non-local context in an stochastic way, which cannot be achieved using hyperedge replacement alone.

CHRGs~\cite{drewes:2012contextual} extend the ordinary HRGs with so-called \emph{contextual} rules, which allow for isolated nodes in their left-hand-sides.
Contextual rules can reference previously generated nodes, that are not attached to the replaced nonterminal hyperedge, and add  structure to them. 
Even though this formalism is strictly more powerful than HRG, it inherits several of the nice properties of HRG. 
In particular, there are useful normal forms and the membership problem is in NP~\cite{drewes:2015contextual}. 
Paralleling previous work on HRG, we set out to find a syntactically restricted subclass of CHRG, enriched by a benign form of cloning, that is sufficiently descriptive to model AMR, while having a membership problem in P.

\section{Graphs and Graph Extension Grammars}
We first recall some standard definitions and notations from discrete mathematics, automata theory, and algorithmic graph theory. 
The set of natural numbers (including $0$) is denoted by $\nat$, and for $n\in\nat$, $[n]$ denotes $\{1,\dots,n\}$. 
In particular, $[0]=\emptyset$. 
For a set $S$, the set of all finite sequences over $S$ is denoted by $S^*$; the subset of $S^*$ containing only those sequences in which no element of $S$ occurs twice is denoted by $\seq S$. 
Both contain the empty sequence $\emptystr$. For $w\in S^*$, $[w]$ denotes the set of all elements of $S$ occurring in $w$.\footnote{The similarity to the notation $[n]$, $n \in \nat$, is intentional, since both operations turn single elements into sets.} 
The canonical extensions of a function $f\colon S\to T$ to $S^*$ and to $\pow S$ are denoted by $f$ as well, i.e., $f(s_1\cdots s_n)=f(s_1)\cdots f(s_n)$ for $s_1,\dots,s_n\in S$ and $f(S')=\{f(s)\mid s\in S'\}$ for all $S'\subseteq S$. 
The restriction of $f\colon S\to T$ to $S'\subseteq S$ is denoted by $f|_{S'}$. 
The power set of $S$ is denoted by $\pow{S}$. For functions $f_1\colon S_1\to T_1$ and $f_2\colon S_2\to T_2$ by $f_1\cup f_2$, we let $f_1\funion f_2\colon S_1\cup S_2\to T_1\cup T_2$ be given by
\[
(f_1\funion f_2)(s)=\left\{\begin{array}{@{}ll@{}}
f_1(s)&\text{if $s\in S_1$}\\
f_2(s)&\text{if $s\in S_2\setminus S_1$\enspace.}
\end{array}\right.
\]

\subsection*{Trees and Algebras}
As sketched in the introduction, we generate trees over an algebra $\alg$ of graph operations and then evaluate each tree $t$ in a non-deterministic fashion to a set of graphs $\val_\alg(t)$. 
To be able to ensure that the graph operations are always well defined on their arguments, we work with \emph{typed} trees. 
%
Let $\types$ be a set of \emph{types}. 
A \emph{$\types$-typed signature} is a pair $A=(\sig,\sigma)$ such that $\sig$ is a finite set of symbols and $\sigma$ is a function that assigns to every $f\in\sig$ a pair $\sigma(f)\in \types^*\times \types$. 
We usually keep $\sigma$ implicit, thus identifying $A$ with $\sig$, and write $f\colon \tau_1\cdots \tau_k\to \tau$ to indicate that $\sigma(f)=(\tau_1\cdots \tau_k,\tau)$. 
The number $k$ is said to be the \emph{rank} of $f$ in $\sig$. 
Furthermore, we may drop the epithet $\types$-typed when speaking of signatures in case $\types$ is of little relevance or understood from context.

The family $\trees\sig=(\trees\sig^\tau)_{\tau\in \types}$ of all well-formed trees over a signature $\sig$ is defined inductively, as usual. 
It is the smallest family of sets $\trees\sig^\tau$ of formal expressions such that, for every $f\colon \tau_1\cdots \tau_k\to \tau$ and all $t_1\in\trees\sig^{\tau_1},\dots,t_k\in\trees\sig^{\tau_k}$, we have $f[t_1,\dots,t_k]\in\trees\sig^\tau$. 
Note that, for every tree $t\in\bigcup_{\tau\in \types}\trees\sig^\tau$ there is a unique type $\tau$ such that $t\in\trees\sig^\tau$. 
Therefore, we may simply write $t\in\trees\sig$ and call $\tau$ the type of $t$.

We refer to the subtrees of a tree by their Gorn addresses. 
Thus, we define the set $\addr t\subseteq\nat^*$ of \emph{addresses} in a tree $t$, and the subtrees they designate, in the usual way: for $t=f[t_1,\dots,t_k]$, $\addr t=\{\emptystr\}\cup\bigcup_{i\in[k]}\{i\ad\mid \ad\in\addr{t_i}\}$ (note that the basis of the recursion is the case $k=0$).
Moreover, for all $\ad\in\addr t$
\[
t/\ad=\left\{\begin{array}{@{}ll@{}}
t & \text{if $\ad=\emptystr$}\\
t_i/\ad' & \text{if $\ad=i\ad'$ for some $i\in[k]$ and $\ad'\in\addr{t_i}$\enspace.}
\end{array}\right.
\]

Let $\sig$ be a $\types$-typed signature. 
A \emph{$\sig$-algebra} is a pair $\alg=((\A_\tau)_{\tau\in \types},(f_\alg)_{f\in \sig})$ where $\A_\tau$ is a set for each $\tau\in \types$ and $f_\alg$ is a function $f_\alg\colon \A_{\tau_1}\times\cdots\times\A_{\tau_k}\to \A_\tau$ for every $f\colon \tau_1\cdots \tau_k\to \tau$ in $\sig$. 
Given a tree $t\in\trees\sig$, the result of evaluating $t$ with respect to $\alg$ is denoted by $\val_\alg(t)$. 
It is defined recursively: if $t=f[t_1,\dots,t_k]$ then
\[
\val_\alg(t)=f_\alg(\val_\alg(t_1),\dots,\val_\alg(t_k))\enspace.
\]

\subsection*{Regular Tree Languages}

To generate trees (i.e., expressions) over the operations of an algebra, we use \emph{regular tree grammars}. 
Our interest is in algebras over graph domains, but to illustrate the general idea, we can look at an algebra over the rational numbers $\mathbb{Q}$. 
In this setting, the operations could be the standard arithmetic ones, such as $+, -, /, \times$, all binary, together with constant-valued operations in $\mathbb{Q}$, all of arity zero. 
It is hopefully easy to see how the trees $+[\times[2, \frac{1}{2}],2]$ and $/[10,-[10,4]]$ evaluate to $2$ and $\frac{5}{3}$, respectively. 
Moreover, by introducing types, we may work with an algebra over a heterogeneous domain, e.g., natural numbers and rationals, and use the types to ensure that when a tree is evaluated, operations are always applied to suitable arguments.

\begin{definition}[regular tree grammar]
Let $\types$ be a set of types. A $\types$-typed \emph{regular tree grammar} (or, for brevity, regular tree grammar) is a system $g=(N,\sig,P,S)$ consisting of the following components:
\begin{itemize}
    \item $N$ is a $\types$-typed signature of symbols of rank~$0$ called \emph{nonterminals}; we write $N_\tau$ to denote the set of all nonterminals $A\in N$ such that $A\colon\emptystr\to\tau$, where $\tau\in\types$.
    \item $\sig$ is a $\types$-typed signature of \emph{terminals} which is disjoint with $N$.
    \item $P$ is a finite set of \emph{productions} of the form $A\to f[A_1,\dots,A_k]$ where, for some types $\tau,\tau_1,\dots,\tau_k\in \types$, we have $f\colon \tau_1\cdots \tau_k\to \tau$, $A_i\in N_{\tau_i}$ for all $i\in[k]$, and $A\in N_\tau$.
    \item $S\in N_\tau$ is the \emph{initial nonterminal} for some type $\tau\in \types$.
\end{itemize}
We also say that $g$ is a regular tree grammar \emph{over $\sig$}.
\end{definition}

\begin{definition}[regular tree language]
Let $g=(N,\sig,P,S)$ be a regular tree grammar. The family $(L_A(g))_{A\in N}$ is the smallest family of sets of trees such that, for all $A\in N$, $L_A(g)$ is the set of all trees $f[t_1,\dots,t_k]$  such that there is a production $(A \to f[A_1,\dots,A_k])\in P$ for some $k\in\nat$ and $A,A_1,\dots,A_k\in N$, and $t_i\in L_{A_i}(g)$ for all $i\in[k]$.
The \emph{regular tree language} generated by $g$ is $L(g)=L_S(g)$.
\end{definition}

Note that, as can be shown by a straightforward induction, $L_A(g)\subseteq\trees\sig^\tau$ for all $A\in N_\tau$. In particular $L(g)\subseteq\trees\sig^\tau$ where $\tau$ is the type of $S$, i.e., $S\in N_\tau$.

\subsection*{Graphs and Graph Operations}

We now define the type of graphs used in this paper, and the operations used to construct them. 
In short, we work with node- and edge-labelled directed graphs, each equipped with a finite set of ports. From a graph operation point of view, the sequence of ports is the ``interface'' of the graph; its nodes are the only ones that can individually be accessed. 
To ensure that the resulting graph is well-formed, we need to pay attention to the number of ports, which determines the \emph{type} of the graph.

\begin{definition}[graph]

A \emph{labelling alphabet} (or simply \emph{alphabet}) is a pair $\labels=(\ndlab,\edlab)$ of finite sets $\ndlab$ and $\edlab$ of labels. A \emph{graph} (over $\labels$) is a system $G=(V,E,\lab,\port)$ where
\begin{itemize}
    \item $V$ is a finite set of \emph{nodes},
    \item $E\subseteq V\times\edlab\times V$ is a (necessarily finite) set of \emph{edges},
    \item $\lab\colon V\to\ndlab$ assigns a node label to every node, and
    \item $\port\in\seq V$ is a sequence of nodes called \emph{ports}.
\end{itemize}
The \emph{type} of $G$ is $|\port|$. The set of all graphs (over an implicitly understood labelling alphabet) of type $\tau$ is denoted by $\graphs_\tau$. For an edge $e=(u,\ell,v)\in E$, we let $\src e=u$ and $\tar e=v$.
\end{definition}

An \emph{isomorphism} (on $G$) is a bijective function $\mu\colon V\to V'$ for some set $V'$ of nodes. 
It maps $G$ to the graph $\mu(G)=(V',E',\mu\circ\lab,\mu(\port))$ where $E'=\{(\mu(u),\ell,\mu(v))\mid (u,\ell,v)\in E\}$. 
If such an isomorphism exists, then the graphs $G$ and $\mu(G)$ are said to be \emph{isomorphic}.

The graph operations are of two kinds. The first is essentially disjoint union where the port sequences of the argument graphs are concatenated.
Let $\tau,\tau'\in\nat$. The operation $\union\tau{\tau'}\colon\graphs_\tau\times\graphs_{\tau'}\to\graphs_{\tau+\tau'}$ is defined as follows: for $G\in\graphs_\tau$ and $G'\in\graphs_{\tau'}$, $\union\tau{\tau'}(G,G')$ yields the graph in $\graphs_{\tau+\tau'}$ obtained by making the two graphs disjoint through a suitable renaming of nodes and taking their union. This is defined in the obvious way for the first three components and concatenates the port sequences of both graphs. Thus, the union operation $\union\tau{\tau'}$ is not commutative and only defined up to isomorphism. We usually write $G\union\tau{\tau'}G'$ instead of $\union\tau{\tau'}(G,G')$. To avoid unnecessary technicalities, we shall generally assume that $G$ and $G'$ are disjoint from the start, and that no renaming of nodes takes place. We extend $\union\tau{\tau'}$ to an operation $\union\tau{\tau'}\colon\pow{\graphs_\tau}\times\pow{\graphs_{\tau'}}\to\pow{\graphs_{\tau+\tau'}}$ in the usual way: for $\G\subseteq\graphs_\tau$ and $\G'\in\graphs_{\tau'}$, $\G\union\tau{\tau'}\G'=\{G\union\tau{\tau'}G'\mid G\in\G_\tau,\ G'\in\G'\}$.

To introduce the second type of graph operation, we first define an auxiliary nondeterministic \emph{cloning operation}. The purpose of this operation is to make the expansion operations (to be defined afterwards) more flexible by allowing parts of it to be replicated an arbitrary number of times. This increases the flexibility with which it can  attach to an existing structure. The operation was originally introduced to formalise the structure of object-oriented programs~\cite{DrewesEtAl:2010}, and  later adopted in computational linguistics~\cite{Bjorklund:2019b}.

\begin{figure}[t]
    \centering
        \tikzset{
  every label/.style={draw=none, fill=none, inner sep=1pt}
}

     \begin{tikzpicture}[xscale=.5, yscale=.75]
     \tikzstyle{overtex}=[circle,line width=0.30mm,fill=white,draw,inner sep=.5pt, minimum size=2ex];

    \node at (1,1.5) [overtex] (nx1) {\footnotesize $b$};
     \node at (4.75,1.5) [overtex] (nx2) {\footnotesize $a$};
     \node at (7,1.5) [overtex] (nx3) {\footnotesize $b$};
     
     \node at (0,0) [overtex] (ny1) {\footnotesize $c$};
     \node at (2,0) [overtex, fill=UmUGreen!30, label=below:{\Large {\textcolor{UmUGreen}{$\star$}}}] (ny2) {\footnotesize $c$};
     \node at (4,0) [overtex] (ny3) {\footnotesize $b$};
     \node at (6,0) [overtex] (ny4) {\footnotesize $b$};
     \node at (8,0) [overtex] (ny5) {\footnotesize $a$};
     
     
    \draw[->,line width=0.30mm,-latex,bend left=7] (nx1) to (nx2);
    \draw[->,line width=0.30mm,-latex,bend right=10] (nx1) to (ny1);
    \draw[->,line width=0.30mm,-latex,bend left=10] (nx1) to  (ny2) ;
    \draw[->,line width=0.30mm,-latex,bend right=10] (nx2) to (ny2);
    \draw[->,line width=0.30mm,-latex,bend right=5] (nx2) to (ny3);
    \draw[->,line width=0.30mm,-latex,bend left=10] (nx2) to (ny5);      
          
      \begin{scope}[shift={(11.5,0)}]      
     \node at (1,1.5) [overtex] (x1) {\footnotesize $b$};
     \node at (10.75,1.5) [overtex] (x2) {\footnotesize $a$};
     \node at (13,1.5) [overtex] (x3) {\footnotesize $b$};
     
     \node at (0,0) [overtex] (y1) {\footnotesize $c$};
     \node at (2,0) [overtex, fill=UmUGreen!30] (y2a) {\footnotesize $c$};
     \node at (4,0) [overtex, fill=UmUGreen!30] (y2b) {\footnotesize $c$};
     \node at (6,0)  (y2c) {...};
     \node at (8,0) [overtex, fill=UmUGreen!30] (y2d) {\footnotesize $c$};
     \node at (10,0) [overtex] (y3) {\footnotesize $b$};
     \node at (12,0) [overtex] (y4) {\footnotesize $b$};
     \node at (14,0) [overtex] (y5) {\footnotesize $a$};
     
     
    \draw[->,line width=0.30mm,-latex,bend left=7] (x1) to (x2);
    \draw[->,line width=0.30mm,-latex,bend right=10] (x1) to (y1);
    \draw[->,line width=0.30mm,-latex,bend left=10] (x1) to  (y2a);
    \draw[->,line width=0.30mm,-latex,bend right=10] (x2) to (y2a);
    \draw[->,line width=0.30mm,-latex,bend left=10] (x1) to  (y2b) ;
    \draw[->,line width=0.30mm,-latex,bend right=10] (x2) to (y2b);
    \draw[->,line width=0.30mm,-latex,bend left=10] (x1) to  (y2d) ;
    \draw[->,line width=0.30mm,-latex,bend right=10] (x2) to (y2d);
    \draw[->,line width=0.30mm,-latex,bend right=5] (x2) to (y3);
    \draw[->,line width=0.30mm,-latex,bend left=10] (x2) to (y5);
    \end{scope}
    \end{tikzpicture}
    
        \caption{The cloning operation takes a graph  $G$ and a subset $C$ of its nodes, and replaces each of the nodes in $C$ and their incidents edges by an arbitrary number of copies. The above figure shows $G$ on the left-hand side, with $C$ containing the single node marked with a star. On the right-hand side, the node and attaching structure has been replaced by  a number of clones.}
    \label{fig:cloning}
\end{figure}
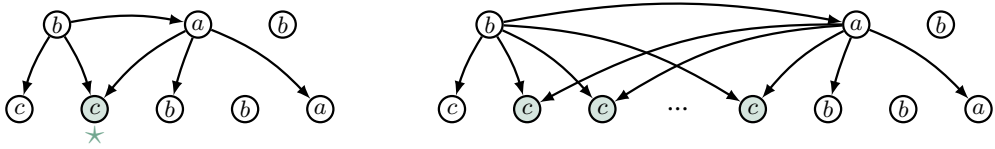

To define the cloning operation, consider a graph $G=(V,E,\lab,\port)$ and let $C\subseteq V\setminus[\port]$. Then $\clone CG$ is the set of all graphs obtained from $G$ by replacing each of the nodes in $C$ and their incident edges by an arbitrary number of copies (Figure~\ref{fig:cloning} illustrates the construction). Formally, $G'=(V',E',\lab',\port')\in\clone CG$ if there is a family $(C_v)_{v\in V}$ of pairwise disjoint sets $C_v$ of nodes, such that the following hold:
\begin{itemize}
    \item $C_v=\{v\}$ for all $v\in V\setminus C$,
    \item $V'=\bigcup_{v\in V}C_v$,
    \item $E'=\bigcup_{(u,\ell,v)\in E}C_u\times\{\ell\}\times C_v$, and
    \item for all $v\in V$ and $v'\in C_v$, $\lab'(v')=\lab(v)$.
\end{itemize}
Note that cloning does \emph{not} rename nodes in $V\setminus C$. In the following, we shall continue to denote by $C_v$ ($v\in V$) the set of clones of $v$ in a graph in $\clone CG$.

We can now define the second type of operation. It is a unary operation called \emph{graph expansion operation} or simply \emph{expansion operation}. We will actually use a restricted form, called \emph{extension operation}, but define the more general expansion operation first. An expansion operation is described by a graph enriched by two additional components: let
\[
\ext=(V_\ext,E_\ext,\lab_\ext,\port_\ext,\dock_\ext,C_\ext)
\]
where both $\und\ext=(V_\ext,E_\ext,\lab_\ext,\port_\ext)$, henceforth called the \emph{underlying graph} of $\ext$, and $(V_\ext,E_\ext,\lab_\ext,\dock_\ext)$ are graphs, and $C_\ext\subseteq V_\ext\setminus([\port_\ext]\cup[\dock_\ext])$. 
We call $\context_\ext=V_\ext\setminus([\port_\ext]\cup[\dock_\ext])$ the set of \emph{context nodes} and $\New_\ext=[\port_\ext]\setminus[\dock_\ext]$ the \emph{new nodes}. 
The nodes in $C_\ext$ are the \emph{clonable nodes}.
This terminology reflects that, according to the following definition, an application of $\ext$ to a graph adds new (copies of the) nodes in $\New_\ext$ while those in $\context_\ext$ refer to nondeterministically chosen non-ports in the input graph. While every ordinary context node refers to exactly one node in the input graph, clonable context nodes refer to any number of nodes.

Throughout the rest of this paper, we shall continue to denote the components of an expansion operation $\ext$ by $V_\ext$, $E_\ext$, $\lab_\ext$, $\port_\ext$, $\dock_\ext$,  and $C_\ext$.

Applying $\ext$ to an argument graph $G=(V,E,\lab,\port)\in\graphs_\tau$ is possible if $\tau=|\dock|$. It then yields a graph of type $\tau'=|\port_\ext|$ by fusing the nodes in $\dock_\ext$ with those in $\port$. Moreover, all context nodes, that is, all nodes in the set $\context_\ext$, are fused with arbitrary (pairwise distinct) nodes in $V\setminus[\port]$ carrying the same label. Prior to this nodes in $C_\ext$ can be cloned an arbitrary number of times. Thus, the application of the operation to $G$ clones the nodes in $C$, fuses those in $\dock_\ext$ with those in $\port$, and fuses every node in $C_v$ ($v\in\context_\ext$) injectively with any node in $V\setminus[\port]$ carrying the same label. The port sequence of the resulting graph is $\port_\ext$.

Formally, let $\tau=|\dock_\ext|$ and $\tau'=|\port_\ext|$. Then $\ext$ is interpreted as the nondeterministic operation $\ext\colon\graphs_\tau\to\pow{\graphs_{\tau'}}$ defined as follows. Given a graph $G=(V,E,\lab,\port)\in\graphs_\tau$, a graph $H\in \graphs_{\tau'}$ is in $\ext(G)$ if it can be obtained by the following steps:
\begin{enumerate}
    \item Choose a graph $G'\in\clone{C_\ext}{\und\ext}$ and an isomorphism $\mu$ on $G'$ such that $\mu(G')=(V',E',\lab',\port')$ satisfies
    \begin{enumerate}
    \item $\mu(\New_\ext)\cap V=\emptyset$,
    \item $\mu(\dock_\ext)=\port$, and
    \item for all nodes $v\in\context_\ext$ and $v'\in\mu(C_v)$, it holds that $\mu(C_v)\subseteq V\setminus[\port]$ and $\lab_\ext(v)=\lab(v')$.
    \end{enumerate}
    \item Define $H=(V\cup V',E\cup E',\lab\funion\lab',\port')$.
\end{enumerate}
Note that, by the definition of $\funion$, the labels of nodes in $\dock_\ext$ are disregarded, i.e., the nodes which are ports in $G$ keep their labels as determined by $\lab$.\footnote{This idea has its origins in personal discussions between the second author and Berthold Hoffmann in~2019.} Thus, in specifying an expansion operation $\ext$ as above, the labels of nodes in $\dock_\ext$ can be dropped, essentially regarding these nodes as unlabelled ones.

In the following, we mainly view $\ext$ as an operation $\ext\colon\pow{\graphs_\tau}\to\pow{\graphs_{\tau'}}$ defined as $\ext(\G)=\bigcup_{G\in\G}\ext(G)$ for all $\G\subseteq\graphs_\tau$. We call $\und\ext=(V_\ext,E_\ext,\lab_\ext,\port_\ext)$ the \emph{underlying graph} of $\ext$, and may also denote $\ext$ as $(G,\dock,C)$ where $G=\und\ext$.

We now specialise expansion operations to extension operations by placing conditions on their structure. The intuition is to make sure that graphs are built bottom-up, i.e., $\ext$ always extends the input graph by placing nodes and edges ``on top'', and in such a way that all nodes of the argument graph are reachable from the ports. 
For this, we require that only new ports -- ports that do not correspond to ports of the input graph -- have outgoing edges, and that all ports of the input graph that are not ports of the resulting graph carry incoming edges. Noting that the former set is $\New_\ext$ and the latter is $[\dock_\ext]\setminus[\port_\ext]$, this results in the following pair of formal requirements:
\begin{enumerate}[label=(R\arabic*),leftmargin=*,series=R]
    \item\label{sources} $\{\src e\mid e\in E_\ext\}\subseteq\New_\ext$, and
    \item\label{forget} $[\dock_\ext]\setminus[\port_\ext]\subseteq\{\tar e\mid e\in E_\ext\}$
.
\end{enumerate}

\begin{example}
Figure~\ref{fig:extension_operation_and_graph} depicts an extension operation together with a graph to which it can be applied. 
In Figure~\ref{fig:resulting_graphs}, we see three different graphs, all resulting from the application of the extension operation in Figure~\ref{fig:extension_operation_and_graph} to the graph in Figure~\ref{fig:resulting_graphs}.
\end{example}

\begin{figure}
    \begin{center}
    \tikzset{
  every label/.style={draw=none, fill=none, inner sep=1pt}
}
     \begin{tikzpicture}[xscale=.5, yscale=.75]
     \tikzstyle{overtex}=[circle,line width=0.30mm,fill=white,draw,inner sep=.5pt, minimum size=2ex];
          
     \input{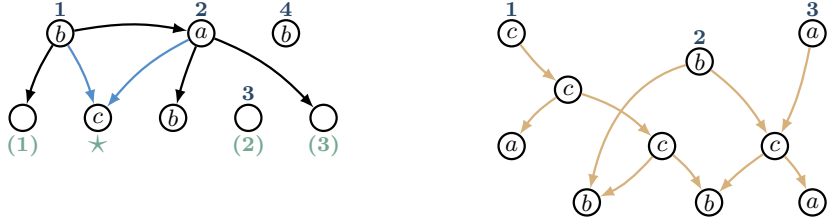}
     
     \begin{scope}[shift={(13,4)}]
     \node at (0,-2.5) [overtex, label=above:\prt 1] (Xz1) {\footnotesize $c$};
     \node at (8,-2.5) [overtex, label=above:\prt 3] (Xz2) {\footnotesize $a$};
     
     
     \node at (1.5,-3.5) [overtex] (Xu1) {\footnotesize $c$};
     \node at (5,-3) [overtex, label=above:\prt 2] (Xu2) {\footnotesize $b$};
     
     \node at (0,-4.5) [overtex] (Xv1) {\footnotesize $a$};
     \node at (4,-4.5) [overtex] (Xv2) {\footnotesize $c$};
     \node at (7,-4.5) [overtex] (Xv3) {\footnotesize $c$};
     
     \node at (2,-5.5) [overtex] (Xw1) {\footnotesize $b$};
     \node at (5.25,-5.5) [overtex] (Xw2) {\footnotesize $b$};
     \node at (8,-5.5) [overtex] (Xw3) {\footnotesize $a$};
     
    \draw[->,line width=0.30mm,-latex,UmUGold,bend right=7] (Xz1) to (Xu1);
    \draw[->,line width=0.30mm,-latex,UmUGold,bend left=10] (Xz2) to (Xv3);
    \draw[->,line width=0.30mm,-latex,UmUGold,bend right=10] (Xu1) to (Xv1);
    \draw[->,line width=0.30mm,-latex,UmUGold,bend left=10] (Xu1) to (Xv2);
    \draw[->,line width=0.30mm,-latex,UmUGold,bend right=30] (Xu2) to (Xw1);
    \draw[->,line width=0.30mm,-latex,UmUGold,bend left=10] (Xu2) to (Xv3);

    \draw[->,line width=0.30mm,-latex,UmUGold,bend left=10] (Xv2) to (Xw1);
    \draw[->,line width=0.30mm,-latex,UmUGold,bend left=10] (Xv2) to (Xw2);
    \draw[->,line width=0.30mm,-latex,UmUGold,bend right=5] (Xv3) to (Xw2);
    \draw[->,line width=0.30mm,-latex,UmUGold,bend left=10] (Xv3) to (Xw3);
    
    \path [use as bounding box] ;
    
    \end{scope}
    \end{tikzpicture}
    
    \end{center}
    \caption{The figure on the left shows an extension operation $\ext$ with three ports (indicated with numbers above the nodes), three docks (indicated with numbers below the nodes), and one clonable node (indicated with a star). The figure on the right shows a graph $G$ with three ports (again, indicated with numbers above the nodes).  The operation $\ext$ is applicable to $G$ because (i)  the number of ports of $G$ coincides with the number of docks of $\ext$, so the types are compatible, and (ii) the pair of nodes in $\context_\ext$ can be identified with nodes in $G$ outside of the ports of $G$ (in several ways).  }
    \label{fig:extension_operation_and_graph}
\end{figure}

\begin{figure}
    \begin{center}
    \tikzset{
  every label/.style={draw=none, fill=none, inner sep=1pt}
}
     \begin{tikzpicture}[xscale=.45, yscale=.75]
     \tikzstyle{overtex}=[circle,line width=0.30mm,fill=white,draw,inner sep=.5pt, minimum size=2ex];

     \node at (1,0) [overtex, label=above:\prt 1] (x1) {\footnotesize $b$};
     \node at (4.75,0) [overtex, label=above:\prt 2] (x2) {\footnotesize $a$};
     \node at (7,0) [overtex, label=above:\prt 4] (x3) {\footnotesize $b$};
     

     \node at (0,-2.5) [overtex] (Xz1) {\footnotesize $c$};
     \node at (8,-2.5) [overtex] (Xz2) {\footnotesize $a$};
     
     
     \node at (1.5,-3.5) [overtex] (Xu1) {\footnotesize $c$};
     \node at (5,-3) [overtex, label=above:\prt 3] (Xu2) {\footnotesize $b$};
     
     \node at (0,-4.5) [overtex] (Xv1) {\footnotesize $a$};
     \node at (4,-4.5) [overtex] (Xv2) {\footnotesize $c$};
     \node at (7,-4.5) [overtex] (Xv3) {\footnotesize $c$};
     
     \node at (2,-5.5) [overtex] (Xw1) {\footnotesize $b$};
     \node at (5.25,-5.5) [overtex] (Xw2) {\footnotesize $b$};
     \node at (8,-5.5) [overtex] (Xw3) {\footnotesize $a$};
     
    \draw[->,line width=0.30mm,-latex,bend left=7] (x1) to (x2);
    \draw[->,line width=0.30mm,-latex,bend right=10] (x1) to (Xz1);
    \draw[->,line width=0.30mm,-latex,bend left=35] (x2) to (Xw2);
    \draw[->,line width=0.30mm,-latex,bend left=20] (x2) to (Xz2);
     
    \draw[->,line width=0.30mm,-latex,UmUGold,bend right=7] (Xz1) to (Xu1);
    \draw[->,line width=0.30mm,-latex,UmUGold,bend left=10] (Xz2) to (Xv3);
    \draw[->,line width=0.30mm,-latex,UmUGold,bend right=10] (Xu1) to (Xv1);
    \draw[->,line width=0.30mm,-latex,UmUGold,bend left=10] (Xu1) to (Xv2);
    \draw[->,line width=0.30mm,-latex,UmUGold,bend right=30] (Xu2) to (Xw1);
    \draw[->,line width=0.30mm,-latex,UmUGold,bend left=10] (Xu2) to (Xv3);

    \draw[->,line width=0.30mm,-latex,UmUGold,bend left=10] (Xv2) to (Xw1);
    \draw[->,line width=0.30mm,-latex,UmUGold,bend left=10] (Xv2) to (Xw2);
    \draw[->,line width=0.30mm,-latex,UmUGold,bend right=5] (Xv3) to (Xw2);
    \draw[->,line width=0.30mm,-latex,UmUGold,bend left=10] (Xv3) to (Xw3);
    
    \path [use as bounding box] ;
    
    \end{tikzpicture}
    \qquad
    \tikzset{
  every label/.style={draw=none, fill=none, inner sep=1pt}
}
     \begin{tikzpicture}[xscale=.45, yscale=.75]
     \tikzstyle{overtex}=[circle,line width=0.30mm,fill=white,draw,inner sep=.5pt, minimum size=2ex];

     \node at (1,0) [overtex, label=above:\prt 1] (x1) {\footnotesize $b$};
     \node at (4.75,0) [overtex, label=above:\prt 2] (x2) {\footnotesize $a$};
     \node at (7,0) [overtex, label=above:\prt 4] (x3) {\footnotesize $b$};
     

     \node at (0,-2.5) [overtex] (Xz1) {\footnotesize $c$};
     \node at (8,-2.5) [overtex] (Xz2) {\footnotesize $a$};
     
     
     \node at (1.5,-3.5) [overtex] (Xu1) {\footnotesize $c$};
     \node at (5,-3) [overtex, label=above:\prt 3] (Xu2) {\footnotesize $b$};
     
     \node at (0,-4.5) [overtex] (Xv1) {\footnotesize $a$};
     \node at (4,-4.5) [overtex] (Xv2) {\footnotesize $c$};
     \node at (7,-4.5) [overtex] (Xv3) {\footnotesize $c$};
     
     \node at (2,-5.5) [overtex] (Xw1) {\footnotesize $b$};
     \node at (5.25,-5.5) [overtex] (Xw2) {\footnotesize $b$};
     \node at (8,-5.5) [overtex] (Xw3) {\footnotesize $a$};
     
    \draw[->,line width=0.30mm,-latex,bend left=7] (x1) to (x2);
    \draw[->,line width=0.30mm,-latex,bend right=10] (x1) to (Xz1);
    \draw[->,line width=0.30mm,-latex,bend left=10,color=NoUmUColour] (x1) to (Xv2);
    \draw[->,line width=0.30mm,-latex,bend left=35] (x2) to (Xw2);
    \draw[->,line width=0.30mm,-latex,bend right=10,color=NoUmUColour] (x2) to (Xv2);
    \draw[->,line width=0.30mm,-latex,bend left=20] (x2) to (Xz2);
     
    \draw[->,line width=0.30mm,-latex,UmUGold,bend right=7] (Xz1) to (Xu1);
    \draw[->,line width=0.30mm,-latex,UmUGold,bend left=10] (Xz2) to (Xv3);
    \draw[->,line width=0.30mm,-latex,UmUGold,bend right=10] (Xu1) to (Xv1);
    \draw[->,line width=0.30mm,-latex,UmUGold,bend left=10] (Xu1) to (Xv2);
    \draw[->,line width=0.30mm,-latex,UmUGold,bend right=30] (Xu2) to (Xw1);
    \draw[->,line width=0.30mm,-latex,UmUGold,bend left=10] (Xu2) to (Xv3);

    \draw[->,line width=0.30mm,-latex,UmUGold,bend left=10] (Xv2) to (Xw1);
    \draw[->,line width=0.30mm,-latex,UmUGold,bend left=10] (Xv2) to (Xw2);
    \draw[->,line width=0.30mm,-latex,UmUGold,bend right=5] (Xv3) to (Xw2);
    \draw[->,line width=0.30mm,-latex,UmUGold,bend left=10] (Xv3) to (Xw3);
    
    \path [use as bounding box] ;
    
    \end{tikzpicture}
    \qquad
    \tikzset{
  every label/.style={draw=none, fill=none, inner sep=1pt}
}
     \begin{tikzpicture}[xscale=.45, yscale=.75]
     \tikzstyle{overtex}=[circle,line width=0.30mm,fill=white,draw,inner sep=.5pt, minimum size=2ex];

     \node at (1,0) [overtex, label=above:\prt 1] (x1) {\footnotesize $b$};
     \node at (4.75,0) [overtex, label=above:\prt 2] (x2) {\footnotesize $a$};
     \node at (7,0) [overtex, label=above:\prt 4] (x3) {\footnotesize $b$};
     

     \node at (0,-2.5) [overtex] (Xz1) {\footnotesize $c$};
     \node at (8,-2.5) [overtex] (Xz2) {\footnotesize $a$};
     
     
     \node at (1.5,-3.5) [overtex] (Xu1) {\footnotesize $c$};
     \node at (5,-3) [overtex, label=above:\prt 3] (Xu2) {\footnotesize $b$};
     
     \node at (0,-4.5) [overtex] (Xv1) {\footnotesize $a$};
     \node at (4,-4.5) [overtex] (Xv2) {\footnotesize $c$};
     \node at (7,-4.5) [overtex] (Xv3) {\footnotesize $c$};
     
     \node at (2,-5.5) [overtex] (Xw1) {\footnotesize $b$};
     \node at (5.25,-5.5) [overtex] (Xw2) {\footnotesize $b$};
     \node at (8,-5.5) [overtex] (Xw3) {\footnotesize $a$};
     
    \draw[->,line width=0.30mm,-latex,bend left=7] (x1) to (x2);
    \draw[->,line width=0.30mm,-latex,bend right=10] (x1) to (Xz1);
    
    \draw[->,line width=0.30mm,-latex,bend left=10,color=NoUmUColour] (x1) to (Xu1);
    \draw[->,line width=0.30mm,-latex,bend left=30,color=NoUmUColour] (x1) to (Xv3);

    \draw[->,line width=0.30mm,-latex,bend right=20,color=NoUmUColour] (x2) to (Xu1);    
    \draw[->,line width=0.30mm,-latex,bend right=15] (x2) to (Xw1);
    \draw[->,line width=0.30mm,-latex,bend left=20] (x2) to (Xz2);
    \draw[->,line width=0.30mm,-latex,bend left=25,color=NoUmUColour] (x2) to (Xv3);
     
    \draw[->,line width=0.30mm,-latex,UmUGold,bend right=7] (Xz1) to (Xu1);
    \draw[->,line width=0.30mm,-latex,UmUGold,bend left=10] (Xz2) to (Xv3);
    \draw[->,line width=0.30mm,-latex,UmUGold,bend right=10] (Xu1) to (Xv1);
    \draw[->,line width=0.30mm,-latex,UmUGold,bend left=10] (Xu1) to (Xv2);
    \draw[->,line width=0.30mm,-latex,UmUGold,bend right=15] (Xu2) to (Xw1);
    \draw[->,line width=0.30mm,-latex,UmUGold,bend left=10] (Xu2) to (Xv3);

    \draw[->,line width=0.30mm,-latex,UmUGold,bend left=10] (Xv2) to (Xw1);
    \draw[->,line width=0.30mm,-latex,UmUGold,bend left=10] (Xv2) to (Xw2);
    \draw[->,line width=0.30mm,-latex,UmUGold,bend right=5] (Xv3) to (Xw2);
    \draw[->,line width=0.30mm,-latex,UmUGold,bend left=10] (Xv3) to (Xw3);
    
    \path [use as bounding box] ;
    
    \end{tikzpicture}
    \end{center}
    \caption{Three graphs in $\ext(G)$ where the extension operation $\ext$ and the graph $G$ are as in Figure~\ref{fig:extension_operation_and_graph}. The differences between the graphs reflect the number of times the nodes in $C_\ext$ have been replicated, and the choice of mapping from the nodes in $\context_\ext$ to nodes in $G$. The single node in $C_\ext$ has been replicated 0 times to obtain the first graph, once to obtain the second, and twice to obtain the third. In the first two graphs, the non-dock node labelled $b$ has been identified with one $b$-labelled node in $G$, and in the right graph with another.}
    \label{fig:resulting_graphs}
\end{figure}
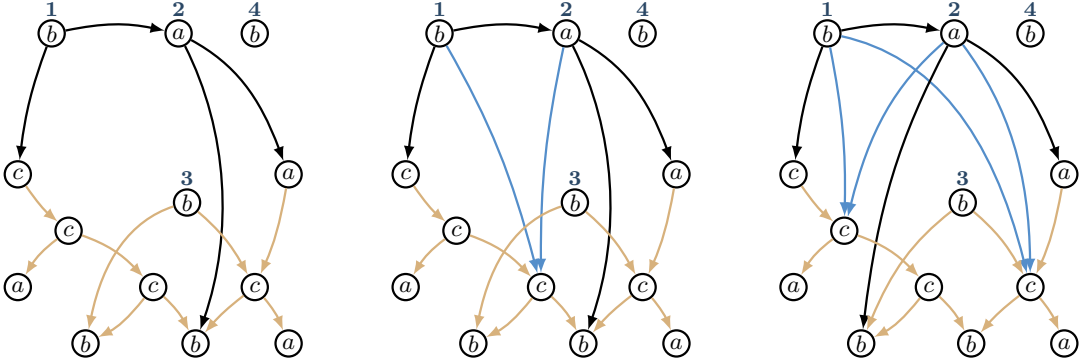

A \emph{graph extension algebra} is a $\sig$-algebra $\alg=((\pow{\graphs_\tau})_{\tau\in\types},(f_\alg)_{f\in\sig})$ where $\types$ is a finite subset of $\nat$ and every symbol in $\sig$ is interpreted as an extension or union operation or the set $\{\emptygraph\}$, where $\emptygraph$ is the empty graph $(\emptyset,\emptyset,\emptyset,\emptystr)$. Note that the operations of the algebra act on sets of graphs rather than on single graphs. This is necessary because of the nondeterministic nature of extension operations.

\begin{definition}[Graph Extension Grammar]
A (tree-based) \emph{graph extension grammar} is a pair $\Gamma=(g,\alg)$ where $\alg$ is a graph extension $\sig$-algebra for some signature $\sig$ and $g$ is a regular tree grammar over $\sig$. The graph language generated by $\Gamma$, denoted by $L(\Gamma)$, is defined as
\[
L(\Gamma)=\bigcup_{t\in L(g)}\val_\alg(t).
\]
\end{definition}

To simplify notation, we shall in the following assume that, in a graph extension grammar as above, $f=f_\alg$ for all $f\in\sig$, i.e., we use the operations themselves as symbols in $\sig$.

Before continuing on to parsing, let us pause to consider at a concrete example of graph extension grammar in the setting of natural language processing. 

\begin{myexample}

The graph extension grammar in Figure~\ref{fig:example_extension_operations} illustrates how the new formalism can model semantic representations.
In this example, the concepts \texttt{girl} and \texttt{boy} represent entities that can act as agents (which verbs cannot),
\texttt{try} and \texttt{persuade} represent verbs that require structural control, and \texttt{want} and \texttt{believe} represent verbs where this is not needed.

First, we take a top-down perspective to understand the tree generation.
The initial nonterminal is $S$. The base case for $S$ is the generation of an extension operation that creates a single node representing a \texttt{girl} or a \texttt{boy} concept. 
In all other cases, $S$ generates an extension operation that adds a verb and its outgoing edges, and in which the verb is the single port. 
The nonterminal $C$ has the same function as $S$ with the following two differences: it has no corresponding base case, and the ports of the extension operations mark both the agent of the verb (designated by an \texttt{arg0} edge) and the verb itself.
The nonterminal $S'$ can only generate extensions that create \texttt{girl} or \texttt{boy} nodes. As we shall see, this allows us to restrict the addition of \texttt{arg0} to persons.
Finally, there is a pair of nonterminals $U$ and $U'$ that both create union operations, the former with two resulting ports, and the latter with three.

Now, we take a bottom-up perspective to see how the extension operations of a tree are evaluated. 
Unless the tree consists of a single node, we will have several extensions generated by $S$ and $S'$ that create \texttt{girl} and \texttt{boy} nodes as the leaves of the tree. 
In this case, we can apply one or more union operations to concatenate their port sequences and make them visible to further operations.
The construction ensures that an \texttt{arg0} edge only can have a person, i.e., a valid agent, as its target. 
After the application of a union operation, it is possible to apply any extension that is applicable to $U$ (or $U'$), meaning that none of the non-port nodes are contextual nodes. 
The resulting graph can have one or two ports -- if the graph has one port, the applied extension operation was generated by $S$, and if the graph has two ports, the applied operation was generated by $C$.
In other words: $C$ signals that the graph is ready for the addition of a control verb, and $S$ that the graph is a valid semantic graph with one port. 
When sufficiently many nodes have been generated, it is no longer necessary (but still possible) to use the union operations.
Instead, we can add incoming edges to already generated nodes contextually (unless control is explicitly needed, as for control verbs).
The generated graphs can thus contain both structural and non-structural dependencies.
See Figure~\ref{fig:example_derivation_and_evaluation} for an example of a tree generated by the grammar in Figure~\ref{fig:example_extension_operations}, and its evaluation into a semantic graph. \hfill $\diamond$%
\input{fig-want-believe-extension-operations}%
\tikzset{
    node distance=1.4cm,
    every node/.style={font=\ttfamily\small,inner sep=2pt},
    terminal/.style={rounded corners, draw, fill=UmUGold!70, inner sep=4pt,font=\ttfamily},
    rule/.style={font=\small},nonterminal/.style={draw,inner sep=2pt, outer sep=0pt,execute at begin node=$,execute at end node=$,font=\ttfamily}
}
\tikzstyle{vertex}=[circle,draw,inner sep=2pt]
\tikzstyle{fvertex}=[circle,fill,draw,inner sep=2pt]
\tikzstyle{edge}=[draw,rectangle,line width=0.30mm,]
\tikzstyle{overtex}=[ellipse,line width=0.30mm,fill=white,draw,inner sep=.5pt, minimum size=2ex]

\begin{figure}[t!]
\hspace*{\fill}%
\scalebox{1.0}{
$\val_\alg\left(
\scalebox{0.7}{
  \begin{tikzpicture}[baseline=210]
  
     \node[circle=2,line width=0.30mm,draw,inner sep=0pt, minimum size=4cm, label=above left:{\Large $S$}] at (1.5,15.25) (c) {};
     \begin{scope}[shift={(0.4,14.65)}]
    \node at (1.2,1.75) [terminal, label=above:\prt 1] (want-x1) {want};
     
    \node at (0.0,0) [terminal] (want-y1) {boy};
    \node at (2.2,0) [overtex, label=below:\dk 1] (want-y2) {};
     
    \draw[->,line width=0.30mm,-latex,bend right=10] (want-x1) to node [pos=0.5, above, sloped] (want-a0) {\texttt{arg0}} (want-y1);
    \draw[->,line width=0.30mm,-latex,bend left=10] (want-x1) to node [pos=0.5, above, sloped] (want-a1) {\texttt{arg1}} (want-y2);
     \end{scope}
  
    \node[circle=2,line width=0.30mm,draw,inner sep=0pt, minimum size=5.5cm, label=above left:{\Large $S$}] at (1.5,10) (c0) {};
     \draw[line width=0.30mm] (c) to (c0);
    \begin{scope}[shift={(-0.5,9.4)}]
    \node[terminal] at (2,1.75) [label=above:\prt 1] (persuade-x1) {\texttt{persuade}};
     
    \node at (0.3,0) [terminal] (persuade-y1) {boy};
    \node at (2,0) [overtex, label=below:\dk 1] (persuade-y2) {};
    \node at (3.7,0) [overtex, label=below:\dk 2] (persuade-y3) {};
     
    \draw[->,line width=0.30mm,-latex,bend right=20] (persuade-x1) to node [pos=0.55, above, sloped] (a0) {\texttt{arg0}} (persuade-y1);
    \draw[->,line width=0.30mm,-latex,swap] (persuade-x1) to node [pos=0.5, above, sloped] (persuade-a1) {\texttt{arg1}} (persuade-y2);
    \draw[->,line width=0.30mm,-latex,bend left=20] (persuade-x1) to node [pos=0.6, above, sloped] (persuade-a1) {\texttt{arg2}} (persuade-y3);
    \end{scope}

     \node[circle=2,line width=0.30mm,draw,inner sep=0pt, minimum size=4cm, label=above left:{\Large $C$}] at (1.5,4.75) (c1) {};
     \draw[line width=0.30mm] (c0) to (c1);
     \begin{scope}[shift={(0.5,4.0)}]
    \node[terminal] at (1,1.75) [label=above:\prt 2] (believe-x1) {believe};
     
    \node at (-.1,0) [overtex, label=100:\prt 1, label=below:\dk 1] (believe-y1) {};
    \node at (2.1,0) [overtex, label=below:\dk 2] (believe-y2) {};
    \draw[->,line width=0.30mm,-latex,bend right=10] (believe-x1) to node [pos=0.5, above, sloped] (persuade-a0) {\texttt{arg0}} (believe-y1);
    \draw[->,line width=0.30mm,-latex,bend left=10] (believe-x1) to node [pos=0.5, above, sloped] (persuade-a1) {\texttt{arg1}} (believe-y2);
     \end{scope}
     
     \node[circle=2,line width=0.30mm,draw,inner sep=0pt, minimum size=1.5cm, label=above left:{\Large $U$}] at (1.5,1.5) (c2) {};
     \draw[line width=0.30mm] (c1) to (c2);
     \begin{scope}[shift={(1.5,1.5)}]
     \node[font=\sffamily\normalsize] at (0,0) (op-y1) {$\union11$};
     \end{scope}

     \node[circle=2,line width=0.30mm,draw,inner sep=0pt, minimum size=1.75cm, label=above:{\Large $S'$}] at (0,0) (c21) {};
     \draw[line width=0.30mm] (c2) to (c21);
     \begin{scope}[shift={(0,0)}]
     \node[terminal] at (0,0) [label=above:\prt 1] (girl-y1) {girl};
     \end{scope}

     \node[circle=2,line width=0.30mm,draw,inner sep=0pt, minimum size=1.75cm, label=above:{\Large $S$}] at (3,0) (c22) {};
     \draw[line width=0.30mm] (c2) to (c22);
     \begin{scope}[shift={(3,0)}]
     \node[terminal] at (0,0) [label=above:\prt 1] (boy-y1) {boy};
     \end{scope}

     \node[circle=2,line width=0.30mm,draw,inner sep=0pt, minimum size=.75cm] at (0,-1.75) (c211) {$\emptygraph$};
     \draw[line width=0.30mm] (c21) to (c211);

     \node[circle=2,line width=0.30mm,draw,inner sep=0pt, minimum size=.75cm] at (3,-1.75) (c221) {$\emptygraph$};
     \draw[line width=0.30mm] (c22) to (c221);

     \path[use as bounding box] (4.7,7);
     
     \draw[dotted,line width=0.30mm,bend right=15] (want-y1)
       .. controls +(1,-1) and +(-1,0) .. (2.5,13.5)
       .. controls +(2.8,0) and +(0,5) .. (4.5,7)
       .. controls +(0,-5) and +(2,2) ..
     (boy-y1);
     
     \draw[dotted,line width=0.30mm,bend right=15] (persuade-y1)
       .. controls +(1,-1) and +(-1,0) .. (2,8)
       .. controls +(2,0) and +(0,2) .. (4,4.5)
       .. controls +(0,-2) and +(1,2) ..
     (boy-y1);
  \end{tikzpicture}
} 
\right)\hspace{.8cm} = $
\hspace{.5cm}
\begin{tikzpicture}[baseline=0]

  \node[terminal, label=above:\prt 1] (try) at (2.5,1.75) {want};
  \node[terminal] (persuade) at (0,1.5) {persuade};
  \node[terminal] (believe) at (0,0) {believe};
  \node[terminal] (g) at (-1.5,-1.5) {girl};
  \node[terminal] (b) at (1.5,-1.5) {boy};
  \path[-latex] (try) edge[line width=0.20mm,bend left] node[pos=.5,right,outer sep=2pt] {arg0} (b) edge[line width=0.20mm,bend right] node[pos=.5,above,outer sep=2pt] {arg1} (persuade);
  \path[-latex] (persuade) 
  edge[line width=0.20mm, bend right=55] node[pos=.25,left,outer sep=2pt] {arg1} (g) 
  edge[line width=0.20mm] node[pos=.5,left,outer sep=2pt] {arg2} (believe) 
  edge[line width=0.20mm,bend left=40] node[pos=.25,right,outer sep=2pt] {arg0} (b);
  \path[-latex] (believe) 
  edge[line width=0.20mm,bend right] node[pos=.65,right,outer sep=2pt] {arg0} (g) 
  edge[line width=0.20mm,bend left] node[pos=.65,left,outer sep=2pt] {arg1} (b);
\end{tikzpicture}
  }
\hspace*{\fill}%

\caption{A (graphical representation of a) tree generated by the regular tree grammar in Figure~\ref{fig:example_extension_operations} and the graph resulting from its evaluation. The nonterminal generating each subtree is shown next to the respective circle. The identification of context nodes with previously created nodes is  indicated by dotted lines. In this example, there is only one possible candidate for a context node, hence the resulting graph is unique up to isomorphism.}
\label{fig:example_derivation_and_evaluation}
\end{figure}
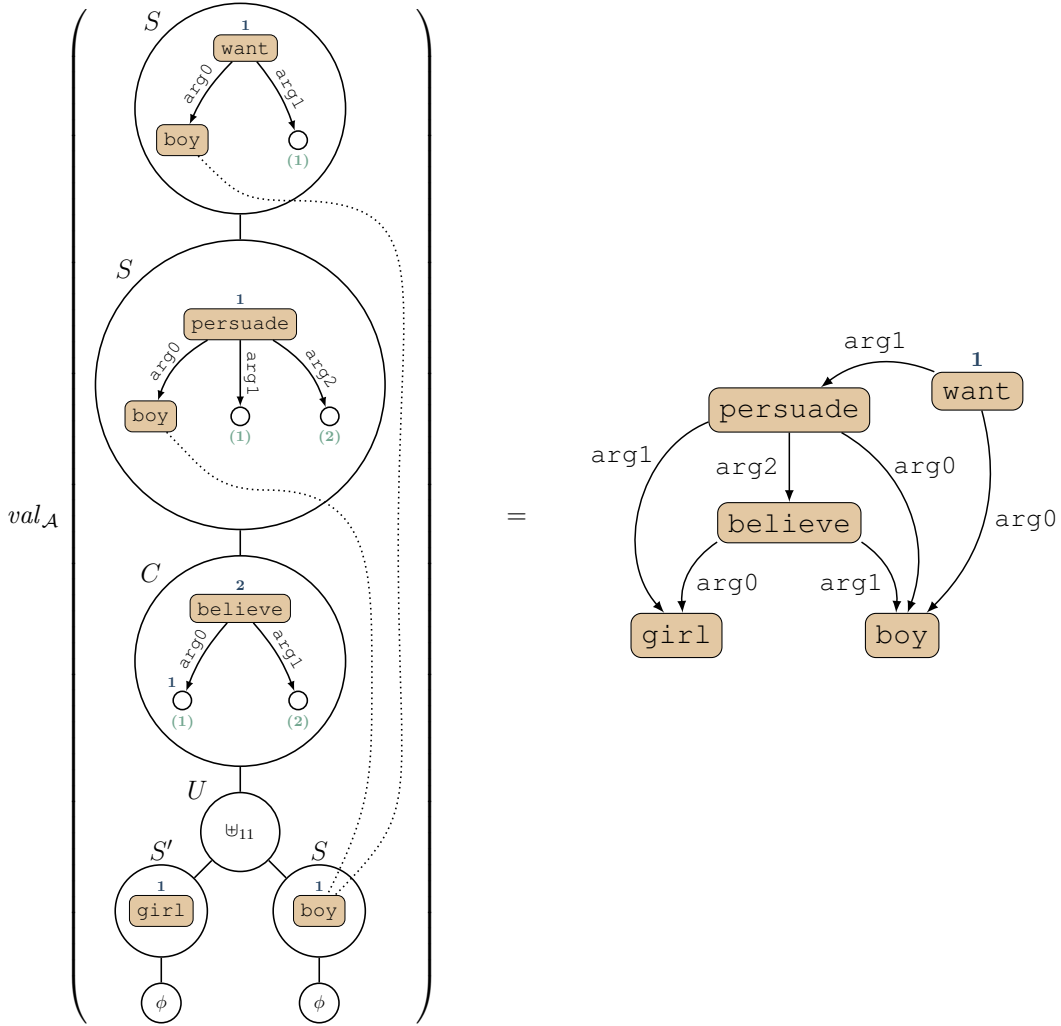%
\end{myexample}

Readers familiar with HRGs or Courcelle's hyperedge-replacement algebras \cite{Courcelle:91b} will have noticed that, when disregarding context nodes, productions using union and extension operations correspond to hyperedge replacement productions of two types:
\begin{enumerate}[label=(\alph*)]
\item $A\to B\union\tau{\tau'}C$ corresponds to a production with two hyperedges in the right-hand side, labelled~$B$ and~$C$, where the first one is attached to $\tau$ of the nodes to which the nonterminal in the left-hand side (which is labelled~$A$) is attached, and the second is attached to the remaining $\tau'$ nodes of the left-hand side. In particular, there are no further nodes in the right-hand side.
\item $A\to\ext[B]$ corresponds to a hyperedge-replacement production with a single nonterminal hyperedge labelled~$B$ in its right-hand side. Such productions are the ones responsible for actually generating new terminal items (nodes and edges).
\end{enumerate}
Generalised extension operations of arity greater than~$1$ can be constructed by defining so-called derived operations through composition of suitable extension and union operations. Given a set of such derived operations, the tree-to-tree mapping that replaces each occurrence of a derived operation by its definition is a tree homomorphism. As regular tree languages are closed under tree homomorphisms, this shows that the restriction to binary unions and unary extensions is no limitation (in contrast to requirements \ref{sources} and \ref{forget}, which ensure polynomial parsability).

\section{Parsing}

We now provide a basic polynomial parsing algorithm for graph extension grammars. For this, and throughout this section, let $\Gamma=(g,\alg)$ be a graph extension grammar, where $g=(N,\sig,P, S)$ and $\alg$ is a graph extension $\sig$-algebra. The goal is parsing, i.e., to decide the membership problem for $\Gamma$,  and in the positive case, produce a witness derivation in $\Gamma$. The membership problem is stated as follows:

\medskip\noindent
\begin{tabular}{@{\bfseries}ll@{}}
Input:& A graph $G$\\
Question:& Does it hold that $G\in L(\Gamma)$?
\end{tabular}

\medskip
By definition, extension operations keep the identities of nodes in the input graph unchanged, whereas new nodes added to the graph may be given arbitrary identities (as long as they are not in the input graph). For the union operation, it holds that renaming of nodes is necessary only if the node sets of the two argument graphs intersect. As a consequence, when evaluating a tree $t$, we can without loss of generality assume that operations never change node identities of argument graphs. In other words, if $G\in\val_\alg(t)$, then there is a (concrete) graph $G_\ad\in\val_\alg(t/\ad)$ such that $G_{\emptystr}=G$ and, for every $\ad\in\addr t$,
\begin{itemize}
\item if $t/\ad=\ext[t/\ad1]$, then $G_\ad\in\ext(G_{\ad1})$ and
\item if $t/\ad=t/\ad1\union\tau{\tau'}t/\ad2$, then $G_\ad=(V\cup V',E\cup E',\lab\funion\lab',\port\,\port')$ where $G_{\ad i}=(V_i,E_i,\lab_i,\port_i)$ for $i\in[2]$.
\end{itemize}
As a consequence, for every $\ad\in\addr t$, the graph $G_\ad$ is a subgraph of $G$.
More precisely, if $G=(V,E,\lab,\port)$ and $G_\ad=(V',E',\lab',\port')$ then $V'\subseteq V$, $E'\subseteq E$, and $\lab'=\lab|_{V'}$. 
Note that, while the graphs $G_\ad$ are usually not uniquely determined by $G$ and $t$, the important fact is that they do exist if and only if $G\in\val_\alg(t)$. 
We say in the following that the family $(G_\ad)_{\ad\in\addr t}$ is a \emph{concrete evaluation of $t$ into $G$}.
Hence, the membership problem amounts to decide, for the input graph $G$, whether there is a tree $t\in L(g)$ that permits a concrete evaluation into $G$.

The following lemma, which forms the basis of our parsing algorithm, shows that $G_\ad$ is determined by $\port'$ alone, as it is the subgraph of $G$ induced by the nodes that are reachable in $G$ from $\port'$ in $G$. More precisely, given a sequence $p\in\seq V$ of nodes of $G$, let $\reach Gp=(V',E|_{V'\times\edlab\times V'},\lab|_{V'},p)$, where $V'$ is the set of nodes reachable in $G$ by directed paths from (any of the nodes in) $\port$.

\begin{lemma}\label{le:reachability}
Let $(G_\ad)_{\ad\in\addr t}$ be a concrete evaluation of a tree $t\in\trees\sig$ into a graph $G$, and let $G_\ad=(V_\ad,E_\ad,\lab_\ad,\port_\ad)$ for every $\ad\in\addr t$. Then $G_\ad=\reach G{\port_\ad}$ for all $\ad\in\addr t$.
\end{lemma}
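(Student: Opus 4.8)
The plan is to prove that the two graphs coincide componentwise. We already know that $G_\ad$ is a subgraph of $G$ with $V_\ad\subseteq V$, $E_\ad\subseteq E$ and $\lab_\ad=\lab|_{V_\ad}$, and that both $G_\ad$ and $\reach G{\port_\ad}$ carry the same port sequence $\port_\ad$. Hence it suffices to show that $V_\ad$ is exactly the set of nodes reachable in $G$ from $\port_\ad$, and that the edges of $G_\ad$ are exactly those edges of $G$ with both endpoints in $V_\ad$. I would split the node equality into two inclusions and prove each by induction on the structure of $t/\ad$: a \emph{reachability} direction, that $V_\ad$ is contained in the reachable set, and a \emph{no-escape} direction, that the reachable set is contained in $V_\ad$.

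For the reachability direction I would prove the slightly stronger statement that every node of $G_\ad$ is reachable from $\port_\ad$ by a directed path lying entirely inside $G_\ad$. The base case (the empty graph) and the union case are immediate, the latter because $\port_\ad=\port_{\ad1}\port_{\ad2}$ and each $G_{\ad i}$ is a subgraph of $G_\ad$. The extension case $t/\ad=\ext[t/\ad1]$ is where the requirements enter: the genuinely new nodes $\mu(\New_\ext)$ are ports of $G_\ad$ and hence trivially reachable, while every old node is reachable from $[\port_{\ad1}]=\mu([\dock_\ext])$ by the induction hypothesis, so it remains to reach each former dock from $\port_\ad$. A dock that is still a port is reached trivially; a \emph{forgotten} dock in $[\dock_\ext]\setminus[\port_\ext]$ has, by requirement \ref{forget}, an incoming edge in $E_\ext$, whose source lies in $\New_\ext$ by requirement \ref{sources}. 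Since neither endpoint of this edge is clonable, it survives cloning and mapping into $E_\ad$, and exhibits the former dock as reachable from a new port.

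The no-escape direction is the crux, and I expect it to be the main obstacle, since it is a statement about the global graph $G$ constrained only by the local data of the subtree at $\ad$. I would isolate it as the claim that every edge of $G$ whose source lies in $V_\ad$ already belongs to $E_\ad$; granting this, a directed path starting in $[\port_\ad]\subseteq V_\ad$ can never leave $V_\ad$, and the same claim shows the induced edge set $E|_{V_\ad\times\edlab\times V_\ad}$ equals $E_\ad$. To prove it I would first record the monotonicity $V_\beta\subseteq V_\ad$ and $E_\beta\subseteq E_\ad$ whenever $\ad$ is a prefix of $\beta$, and then show, moving upward one operation at a time from $\ad$ towards the root, that no operation ever adds an outgoing edge to a node already present in $V_\ad$: at a union the sibling argument contributes only edges whose sources lie in its own node set, which is disjoint from $V_\ad$, while at an extension requirement \ref{sources} forces every newly added edge to emanate from a node of $\New_\ext$, which is fresh and therefore distinct from any node already in $V_\ad$. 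Consequently every edge of $E$ with source in $V_\ad$ already lies in $E_\ad$.

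Finally I would assemble the pieces: the two inclusions give $V_\ad=\{v\mid v\text{ is reachable in }G\text{ from }\port_\ad\}$, the no-escape claim gives the edge equality $E_\ad=E|_{V_\ad\times\edlab\times V_\ad}$, the label equality is inherited from $\lab_\ad=\lab|_{V_\ad}$, and the port sequences agree by definition. Hence $G_\ad=\reach G{\port_\ad}$ for every $\ad\in\addr t$.
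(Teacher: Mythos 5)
Your proposal is correct and follows essentially the same route as the paper: the paper likewise splits the statement into a ``no-escape'' claim (every edge of $G$ with source in $V_\ad$ already lies in $E_\ad$, hence $\cl{}$-reachability from $\port_\ad$ cannot leave $V_\ad$), proved by induction along the chain of addresses between the root and $\ad$ using \ref{sources} and disjointness of union arguments, and a reachability claim ($V_\ad\subseteq$ the set reachable from $\port_\ad$), proved by structural induction on $t/\ad$ using \ref{sources} and \ref{forget} exactly as you describe. The only cosmetic difference is that the paper runs the no-escape induction top-down from the root rather than walking upward from $\ad$.
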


\begin{proof}
\newcommand{\cl}[1]{#1^\triangledown}%
Using the notations from the statement of the lemma, and furthermore denoting the set of all nodes that can be reached from $p\in\seq V$ in $G$ by $\cl p$, we first show the following claim:

\begin{claim}\label{claim:reachable contained}
For all $e\in E$ and $v\in V$, $\src e\in V_\ad$ implies $e\in E_\ad$, and $v\in\cl{\port_\ad}$ implies $v\in V_\ad$.
\end{claim}

We prove Claim~\ref{claim:reachable contained} by induction on the length of $\ad$. For $G_\ad=\emptystr$, the statement holds trivially. Now, assume that it holds for some $\ad\in\addr t$. Since the empty graph $\emptygraph$ has neither ports nor edges, two relevant cases remain.

\begin{firstcase}
$G_\ad$ is of the form $G_{\ad1}\union\tau{\tau'}G_{\ad2}$.
\end{firstcase}

Let $\{i,j\}=[2]$. We have to show that the statement holds for $G_{\ad i}$. Thus, assume that $\src e\in V_{\ad i}$ and $v\in\cl{\port_{\ad i}}$. Then $\src e\in V_\ad$ and thus, by the induction hypothesis, $e\in E_\ad$. By the definition of $\union\tau{\tau'}$ this means that $e\in E_{\ad i}$. Furthermore, $v\in\cl{\port_{\ad i}}$ implies $v\in\cl{\port_\ad}$ and thus, again by the induction hypothesis, $v\in V_\ad$. Hence, it remains to argue that no node in $G_{\ad j}$ can be reached from $\port_{\ad i}$. This follows readily from the just established fact that there is no edge $e'\in V\setminus V_{\ad i}$ for which $\src{e'}\in E_{\ad i}$ (together with the fact that $V_{\ad i}\cap V_{\ad j}=\emptyset$).

\begin{case}
$G_\ad$ is of the form $\ext(G_{\ad1})$ for an extension operation $\ext$.
\end{case}

To show that the statement holds for $G_{\ad1}$, let $G_\ad=(V_{\ad1}\cup V', E_{\ad1}\cup E', \lab_{\ad1}\funion\lab', \port')$, where $(V',E',\lab',\port')$ is obtained from $\und\ext$ as in the definition of $\ext(G_{\ad1})$. By the induction hypothesis, $\src e\in V_\ad$ implies $e\in E_\ad$. Thus, $\src e\in V_{\ad1}$ implies $e\in E_{\ad1}$ unless $e\in E'$. However, by requirement~\ref{sources}, all $e\in E'$ satisfy $\src e\in[\port']\setminus[\port]$, which is equivalent to $\src e\notin V_{\ad1}$ because $V_\ad\setminus V_{\ad1}=[\port']\setminus[\port]$. This shows that $e\in E_{\ad1}$.

Now, let $v\in\cl{\port_{\ad1}}$. If $v\notin V_{\ad1}$, consider the first edge $e$ on a path from a node in $[\port_{\ad1}]$ to $v$ such that $\tar e\notin V_{\ad1}$. Then $\src e\in V_{\ad1}$ but $e\notin E_{\ad1}$, contradicting the previously established fact that $\src e\in V_{\ad1}$ implies $e\notin E_{\ad1}$. This finishes the proof of Claim~\ref{claim:reachable contained}.

The other direction of the second part of Claim~\ref{claim:reachable contained} remains to be shown:

\begin{claim}\label{claim:reachable contains}
$V_\ad\subseteq\cl{\port_\ad}$.
\end{claim}

This time, we proceed by induction on the size of $t/\ad$. The statement is trivially true for $G_\ad=\emptygraph$. Thus, as before, there are two cases to distinguish.

\begin{firstcase}
$G_\ad$ is of the form $G_{\ad1}\union\tau{\tau'}G_{\ad2}$.
\end{firstcase}

By the induction hypothesis, Claim~\ref{claim:reachable contains} holds for $G_{\ad1}$ and $G_{\ad2}$. Consequently, $V_\ad=V_{\ad1}\cup V_{\ad2}\subseteq\cl{\port_{\ad1}}\cup\cl{\port_{\ad2}}=\cl{\port_{\ad}}$.

\begin{case}
$G_\ad$ is of the form $\ext(G_{\ad1})$ for an extension operation $\ext$.
\end{case}

Again, let $G_\ad=(V_{\ad1}\cup V', E_{\ad1}\cup E', \lab_{\ad1}\funion\lab', \port')$, where $(V',E',\lab',\port')$ is obtained from $\und\ext$ as in the definition of $\ext(G_{\ad1})$. (In particular, $\port'=\port_\ad$.) By the induction hypothesis, $V_{\ad1}\subseteq\cl{\port_{\ad1}}$. Moreover, by requirement~\ref{forget}, for every node $v\in[\port_{\ad1}]$ it either holds that $v\in[\port_\ad]$, or there are $u\in[\port_\ad]$ and $e\in E'$ with $\src e=u$ and $\tar e=v$. Hence, $[\port_{\ad1}]\subseteq\cl{\port_\ad}$ and thus $V_{\ad1}\subseteq\cl{\port_{\ad1}}\subseteq\cl{\port_\ad}$. Since, furthermore, $V'\setminus V_{\ad1}\subseteq[\port_\ad]$, this shows that $V_\ad\subseteq [\port_\ad]$, and thus $V_\ad\subseteq\cl{\port_\ad}$, as claimed.
\end{proof}

We note the following immediate consequence of Lemma~\ref{le:reachability}:

\begin{corollary}\label{co:empty graph}
Let $(G_\ad)_{\ad\in\addr t}$ be a concrete evaluation of a tree $t\in\trees\sig$ into a graph $G$. If $G_\ad=\reach G{\emptystr}$ for some $\ad\in\addr t$, then $G_\ad=\emptygraph$.
\end{corollary}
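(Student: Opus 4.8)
The plan is to unfold the definition of $\reach{G}{\emptystr}$ and observe that, at the empty port sequence, it is literally the empty graph. Recall that for $p\in\seq V$ the graph $\reach{G}{p}$ has port sequence $p$ and node set equal to those nodes of $G$ reachable by directed paths from some node occurring in $p$. Taking $p=\emptystr$, the set of start nodes is $[\emptystr]=\emptyset$, so no node is reachable; hence the node set of $\reach{G}{\emptystr}$ is empty and $\reach{G}{\emptystr}=(\emptyset,\emptyset,\emptyset,\emptystr)=\emptygraph$. The hypothesis $G_\ad=\reach{G}{\emptystr}$ then yields $G_\ad=\emptygraph$ in one step.

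To make the connection to Lemma~\ref{le:reachability} explicit, which is where the corollary really draws its force, I would argue as follows. By the lemma, every subgraph in the concrete evaluation satisfies $G_\ad=\reach{G}{\port_\ad}$, so $G_\ad$ is determined by its own port sequence alone. Comparing the port sequences on the two sides of the hypothesis $G_\ad=\reach{G}{\emptystr}$ forces $\port_\ad=\emptystr$, i.e.\ $G_\ad$ has type $0$. Substituting back, $V_\ad$ is the set of nodes reachable from $[\port_\ad]=\emptyset$, which is empty, so $G_\ad=\emptygraph$. Read this way, the corollary records the useful fact that any subgraph of type $0$ arising in a concrete evaluation must be the empty graph -- a statement that would fail without requirements~\ref{sources} and~\ref{forget}, since those are exactly what underlie the lemma and thereby prevent a port-free subgraph from secretly carrying nodes that are unreachable from its (empty) interface.

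I expect no genuine obstacle: the statement is essentially a direct reading of the definition of $\reach{\cdot}{\cdot}$ at the empty port sequence, sharpened by Lemma~\ref{le:reachability}. The only point demanding care is the overloaded use of $\emptystr$. In $G_\ad$ the subscript $\ad$ ranges over tree addresses in $\addr t$, whereas in $\reach{G}{\emptystr}$ the argument $\emptystr$ is the empty \emph{port} sequence; one must not conflate the root address $\emptystr\in\addr t$ with the empty port sequence supplied to $\reach{\cdot}{\cdot}$. Once the two roles of $\emptystr$ are kept apart and the definitions aligned, the conclusion follows immediately.
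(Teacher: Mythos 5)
Your proposal is correct and matches the paper, which states the corollary without proof as an ``immediate consequence'' of Lemma~\ref{le:reachability}: by that lemma $G_\ad=\reach G{\port_\ad}$, and $\reach G\emptystr$ is by definition the empty graph since nothing is reachable from an empty port sequence. Your remark distinguishing the two roles of $\emptystr$ (root address versus empty port sequence) is a sensible clarification, and your observation that the corollary's real force rests on Lemma~\ref{le:reachability} (and hence on \ref{sources} and \ref{forget}) is exactly how the paper uses it in Lemma~\ref{le:normal form}.
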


It follows from Corollary~\ref{co:empty graph} that a nonterminal $A$ of type $\emptystr$ can only derive trees that evaluate to the empty graph $\emptygraph$. 
We can therefore replace all rules where $A$ appears on the left-hand side by a single rule that takes $A$ to the empty graph, without changing the generated graph language. 
The construction suggests the following normal form for graph extension grammars.

\begin{lemma}[normal form]\label{le:normal form}
$\Gamma$ can be transformed into a graph extension grammar $\Gamma'=(g',\alg')$ with $L(\Gamma')=L(\Gamma)$ such that the co-domain $s$ of every extension and union operation in $\alg'$ satisfies $|s|\ge 1$.
\end{lemma}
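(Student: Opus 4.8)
The plan is to eliminate exactly those operations whose codomain type is $0$, namely the extension operations $\ext$ with $\port_\ext=\emptystr$ and the single union operation $\union00$; every other extension and union operation already outputs at least one port. The starting observation is syntactic: in a production $A\to f[A_1,\dots,A_k]$ the type of $A$ equals the codomain type of $f$, so the offending operations occur only in productions whose left-hand side is a nonterminal of type $0$. By Lemma~\ref{le:reachability} together with Corollary~\ref{co:empty graph}, every tree $t$ derivable from a type-$0$ nonterminal satisfies $\val_\alg(t)\subseteq\{\emptygraph\}$: a concrete evaluation of $t$ into any $G\in\val_\alg(t)$ (one exists precisely because $G\in\val_\alg(t)$) has $\port_\emptystr=\emptystr$, hence $G=\reach{G}{\emptystr}=\emptygraph$. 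I would therefore build $g'$ from $g$ by keeping every production whose left-hand side has type $\ge 1$ unchanged and, for every type-$0$ nonterminal $A$, deleting all of its productions and adding the single production $A\to\emptygraph$, where $\emptygraph$ is the empty-graph constant (a permitted operation). Since no production then uses a type-$0$ extension or union operation — those occurred only with a type-$0$ nonterminal on the left, whose productions have all been replaced — I can drop all type-$0$ extension and union operations from the signature, obtaining $\alg'$ in which every surviving extension and union operation has codomain type $\ge 1$, as required.

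To prove $L(\Gamma)\subseteq L(\Gamma')$ I would introduce the tree transformation $\pi$ that replaces every maximal subtree of type $0$ by $\emptygraph$, i.e.\ $\pi(t)=\emptygraph$ if $t$ has type $0$ and $\pi(f[t_1,\dots,t_k])=f[\pi(t_1),\dots,\pi(t_k)]$ otherwise. A straightforward induction on derivations shows that $t\in L_A(g)$ implies $\pi(t)\in L_A(g')$ (type-$0$ nonterminals use only $A\to\emptygraph$, the other rules are untouched, and in the second clause the root symbol $f$ has codomain type $\ge1$, so it still belongs to $\sig$). A second induction, on the structure of $t$, shows $\val_\alg(t)\subseteq\val_{\alg'}(\pi(t))$: in the type-$0$ case both sides lie in $\{\emptygraph\}=\val_{\alg'}(\emptygraph)$ by the observation above, and in the remaining case monotonicity of the set-valued operations $f_\alg=f_{\alg'}$ in each argument propagates the inductive inclusions. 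Combining the two inductions yields $L(\Gamma)\subseteq L(\Gamma')$.

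For the converse, $L(\Gamma')\subseteq L(\Gamma)$, I would reverse the construction: given $t'\in L(g')$ and $G\in\val_{\alg'}(t')$, every leaf $\emptygraph$ of $t'$ introduced for a type-$0$ nonterminal $A$ is replaced by some tree $t_A\in L_A(g)$ with $\emptygraph\in\val_\alg(t_A)$. The resulting tree lies in $L(g)$ (the surrounding productions are shared by $g$ and $g'$) and still evaluates to a set containing $G$, because each substituted subtree contributes only $\emptygraph$, exactly as the replaced leaf did.

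The main obstacle is hidden in this last step: it presupposes that each retained type-$0$ nonterminal $A$ can \emph{actually produce} $\emptygraph$, i.e.\ that $\emptygraph\in\val_\alg(t_A)$ for some $t_A\in L_A(g)$. Corollary~\ref{co:empty graph} only gives the upper bound $\val_\alg(t)\subseteq\{\emptygraph\}$; it does not rule out that \emph{every} derivation from $A$ evaluates to $\emptyset$ (for instance because an extension operation in $t$ carries a context node that can never be matched). If such an $A$ existed, adding $A\to\emptygraph$ would genuinely enlarge the language. I would close this gap by a language-preserving preprocessing step: delete every nonterminal $A$ with $\bigcup_{t\in L_A(g)}\val_\alg(t)=\emptyset$ together with all productions mentioning it. This is sound because the operations are strict ($f_\alg(\dots,\emptyset,\dots)=\emptyset$), so any tree using such a nonterminal evaluates to $\emptyset$ and no graph of $L(\Gamma)$ is lost, while a witnessing tree of a surviving nonterminal uses only surviving nonterminals and hence remains a valid derivation. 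After this reduction every surviving type-$0$ nonterminal is productive, so by the Corollary its value set is exactly $\{\emptygraph\}$ and the required witnesses $t_A$ exist, making the whole argument go through.
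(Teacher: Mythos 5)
Your proposal is correct and follows the same basic strategy as the paper's proof: by Corollary~\ref{co:empty graph}, every type-$0$ subtree in a concrete evaluation contributes only $\emptygraph$, so all productions of a type-$0$ nonterminal $A$ can be replaced by the single production $A\to\emptygraph$, after which no operation with codomain type $0$ remains. The paper's own proof is a three-line version of exactly this.

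What you add in your final paragraph is genuinely valuable. The paper assumes only $L_A(g)\neq\emptystr$ as a set of \emph{trees} and concludes $\val_\alg(L_A(g))=\{\emptygraph\}$, but Corollary~\ref{co:empty graph} yields only the inclusion $\val_\alg(L_A(g))\subseteq\{\emptygraph\}$, and the left-hand side can in fact be empty: a type-$0$ extension operation has $\New_\ext=\emptyset$, hence no edges by~\ref{sources} and no docks by~\ref{forget}, so all of its nodes are context nodes, and if any of them is non-clonable the operation maps $\emptygraph$ to $\emptyset$. A type-$0$ nonterminal all of whose derivations pass through such an operation has empty value set, and since the surrounding operations are strict on $\emptyset$, replacing its rules by $A\to\emptygraph$ could strictly enlarge the generated language. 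Your preprocessing step of discarding nonterminals with empty value set closes this gap for precisely the strictness reason you give; for the type-$0$ nonterminals that actually matter here it is moreover effectively checkable, since a type-$0$ tree evaluates to $\{\emptygraph\}$ exactly when every extension operation occurring in it consists solely of clonable nodes. So your argument is not merely equivalent to the paper's but a more careful version of it.
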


\begin{proof}
We may assume that $L_A(g)\neq\emptyset$ for all $A\in N$. By Corollary~\ref{co:empty graph}, $\sigma(A)=\emptystr$ then implies that $\val_\alg(L_A(g))=\{\emptygraph\}$. It follows that all rules with the left-hand side $A$ can be replaced by a single rule $A\to\emptygraph$.
\end{proof}

From here on, let us assume that $\Gamma$ is in the normal form given by Lemma~\ref{le:normal form}. Turning our attention back on Lemma~\ref{le:reachability}, we find the beginnings of a recursive parsing algorithm: In the following, consider an input graph $G=(V,E,\lab,\port)$, a nonterminal $A\in N$, and a sequence $p\in\seq V$ with $\ndlab(p)=\sigma(A)$. To decide whether $\reach Gp\in\val_\alg(L_A(g))$ (and thus whether $G\in L(\Gamma)$ if $A=S$ and $p=\port_G$), there are two cases. If $p=\emptystr$, then $\reach Gp=\emptygraph\in\val_\alg(L_A(g))$ if and only if the production $A\to\emptygraph$ is in $P$. Otherwise, we need to check each production with the left-hand side $A$, as follows:
\begin{firstcase}
If the production is of the form $A\to B_1\union{\tau_1}{\tau_2}B_2$, then we recursively need to determine whether $\reach G{p_i}\in\val_\alg(L_{B_i}(g))$ for $i=1,2$, where $p=p_1p_2$ and $|p_1|=\tau_1$.
\end{firstcase}
\begin{case}\label{parsing ext}
If the production is of the form $A\to\ext(B)$, the basic intuition is that we need to check all possible mappings of the edges in $E_\ext$ to edges in $E$ that have their sources in $p$. However, since nodes in $C_\ext$ can be cloned, and since some context nodes of $\ext$ may be isolated, the technicalities are slightly more involved.

To simplify the reasoning, let us first assume that $\und\ext$ does not contain isolated context nodes, i.e., that it satisfies the following stronger variant of~\ref{forget}:

\begin{enumerate}[resume*=R]
\item\label{forget2} $V_\ext\setminus[\port_\ext]\subseteq\{\tar e\mid e\in E_\ext\}$.
\end{enumerate}

Let $\port_\ext=u_1\cdots u_k$, $p=v_1\cdots v_k$, and define $\New=\{v_i\mid\text{$i\in[k]$ and $u_i\notin[\dock_\ext]$}\}$.
In other words, $\New$ contains the nodes in $p$ which correspond to those in $\New_\ext$, which are the only nodes that may have edges to other nodes in $\ext$. Now, let $E_P=\{e\in E\mid\src e\in\New\}$ and $V_p=[p]\cup\{\tar e\mid e\in E_p\}$. Note that, by~\ref{forget2}, $V_p$ contains all the nodes of $V$ that must be put into correspondence with nodes in $V_\ext$, while the nodes in in $V \setminus V_p$ are to be left for future, recursive, invocations of the parsing algorithm.  We say that a surjective mapping $m\colon V_p\to V_\ext$ is a \emph{matching of $\ext$ to $\reach Gp$} if it:
\begin{enumerate}[label=(M\arabic*),leftmargin=*,series=M]
\item\label{node label preservation} preserves node labels, that is, $\lab_\ext(m(v))=\lab(v)$ for all $v\in V$ with $m(v)\notin[\dock_\ext]$,
\item\label{clone only clonable} clones only clonable nodes, meaning that $|m^{-1}(u)|=1$ for all $u\in V_\ext\setminus C_\ext$, and
\item\label{edge preservation} bidirectionally preserves edges, that is, $(v,\ell,v')\in E_p \iff (m(v),\ell,m(v'))\in E_\ext$ for all $v,v'\in V_p$ and all $\ell\in\ndlab$.
\end{enumerate}

Now, by Lemma~\ref{le:reachability}, the parsing algorithm needs to check recursively whether there is a matching of $\ext$ to $\reach Gp$ such that $\reach G{m^{-1}(\dock_\ext)}\in\val_\alg(L_{B}(g))$.\footnote{Recall that $[\dock_\ext]\cap C_\ext=\emptyset$, and thus $m^{-1}(\dock_\ext)\in V^{|\dock_\ext|}$.}

Finally, to avoid the stronger assumption~\ref{forget2}, we have to handle any isolated context nodes $u\in V_\ext$. Since, by injectivity, such a node $u$ must be mapped to any node $v$ in $\reach Gp$ with the same label, except those in $V_p$. Thus, we only need to check the existence of such nodes outside $V_p$. Note that clonable isolated context nodes $u$ can (and must) be disregarded because they can be cloned zero times, thus having no effect. In summary, let
\[
\#_\ext(\ell)=|\{u\in \context_\ext\setminus C_\ext\mid\lab_\ext(u)=\ell\}|
\]
for all $\ell\in\ndlab$. Then a matching $m$ must, in addition, satisfy the following condition:

\begin{enumerate}[resume*=M]
\item\label{isolated} For all $\ell\in\ndlab$, $\reach Gp$ contains at least $\#_\ext(\ell)$ nodes $v$ with $\lab(v)=\ell$ that are not in $V_p$.
\end{enumerate}

Note that this condition is independent of $m$.
\end{case}

The pseudocode of the algorithm is shown in Algorithm~\ref{alg:parsing}.
\begin{algorithm}[t]
\small
\algrenewcommand\algorithmicprocedure{\textbf{procedure}}
\caption{Parse a graph $G$ with respect to a graph extension grammar $\Gamma=(g,\alg)$\label{alg:parsing}}
\begin{algorithmic}[1]
\Procedure{Parse}{$G=(V,E,\lab,\port)$}
  \State \Return \Call{Parse\_rec}{$S,\port$} \Comment{Invoke recursive procedure with initial nonterminal $S$}
\EndProcedure
\Procedure{Parse\_rec}{$A\in N$, $p\in\seq V$}
  \If {$\sigma(A)=\emptystr$} \Comment{Leverage normal form}
    \State \Return $(A\to\emptygraph)\in P$
  \ElsIf {$\mathit{result}[A,p]$ defined} \Comment{Use memoïsed result}
    \State \Return $\mathit{result}[A,p]$
  \EndIf
  \State memoïse $\mathit{result}[A,p]\gets\mathit{false}$ \Comment{No success so far}
  \ForAll {$(A\to B_1\union{\tau_1}{\tau_2}B_2)\in P$}\label{outer1}
    \State let $p=p_1p_2$ where $|p_1|=\tau_1$
    \If {\Call{Parse\_rec}{$B_1,p_1$} and \Call{Parse\_rec}{$B_2,p_2$}}
      \State memoïse $\mathit{result}[A,p]\gets\mathit{true}$ \Comment{Memoïse success}
      \State \Return $\mathit{true}$
    \EndIf
  \EndFor
  \ForAll {$(A\to\ext[B])\in P$}\label{outer2}
    \ForAll {matchings $m$ of $\ext$ to $\reach Gp$}\label{for loop}
      \If {\Call{Parse\_rec}{$B,m^{-1}(\dock_\ext)$}}
        \State memoïse $\mathit{result}[A,p]\gets\mathit{true}$ \Comment{Memoïse success}
        \State \Return $\mathit{true}$
      \EndIf
    \EndFor
  \EndFor
  \State \Return $\mathit{false}$
\EndProcedure
\end{algorithmic}
\end{algorithm}%
As we shall see, the algorithm runs in polynomial time in the size of the input graph. However, the maximum length of the type sequences used appears in the exponent, so the degree of the polynomial depends on a property of the grammar. In the following theorem, we let $|G|=|V|+|E|$ for a graph $G=(V,E,\lab,\port)$.

\begin{theorem}\label{th:main}
Let $\sig$ be $\types$-typed and $c=\max\{\tau\mid \tau\in\types\}$. If applied to an input graph $G$, Algorithm~\ref{alg:parsing} terminates in $O(|G|^{2c+1})$ steps and returns $\mathit{true}$ if and only $G\in L(\Gamma)$.
\end{theorem}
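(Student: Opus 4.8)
The statement has two components: correctness (the algorithm answers \emph{true} exactly when $G\in L(\Gamma)$) and the running-time bound. The plan for correctness is to prove the invariant that, upon termination, \textsc{Parse\_rec}$(A,p)$ returns \emph{true} if and only if $\reach Gp\in\val_\alg(L_A(g))$; applying this to $A=S$ and $p=\port$ and recalling Lemma~\ref{le:reachability} then settles membership. The base case $\sigma(A)=\emptystr$ is handled by Corollary~\ref{co:empty graph} together with the normal form of Lemma~\ref{le:normal form}: a type-$\emptystr$ nonterminal derives only $\emptygraph$, so the test $(A\to\emptygraph)\in P$ is exactly right. For the recursive cases I would first record the auxiliary fact that the matching conditions of Case~\ref{parsing ext} are sound and complete, i.e., that $\reach Gp\in\ext(\reach Gq)$ for some $q\in\seq V$ holds if and only if there is a matching $m$ of $\ext$ to $\reach Gp$ with $q=m^{-1}(\dock_\ext)$. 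This is where requirements~\ref{sources} and~\ref{forget} (and its strengthening~\ref{forget2}) enter: \ref{sources} guarantees that only the new ports carry outgoing edges, so that $E_p$ and $V_p$ capture precisely the part of $\reach Gp$ that must be identified with $\und\ext$, while \ref{forget} ensures that the forgotten docks are reachable and hence that $\reach Gq$ is the correct argument subgraph; conditions~\ref{node label preservation}--\ref{isolated} then transcribe the clause-by-clause definition of $\ext$ (labels, cloning of clonable nodes only, bidirectional edge preservation, and availability of isolated context nodes).

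With this fact in hand, the invariant is proved by establishing the two implications separately. Soundness (\emph{true} implies membership) I would obtain by induction on the time at which a state is finalised to \emph{true}: whenever $\mathit{result}[A,p]$ is set to \emph{true}, this happens through a production whose recursive premises already returned \emph{true} and were therefore finalised strictly earlier, so one assembles a witness tree in $L_A(g)$ and a concrete evaluation into $\reach Gp$ bottom-up. Completeness (membership implies \emph{true}) is the delicate direction, because the line \texttt{memo\"ise result}$[A,p]\gets$\emph{false} deliberately breaks cycles, and such cycles genuinely occur: a union with a type-$\emptystr$ argument or an extension with $\New_\ext=\emptyset$ leaves the reachable subgraph unchanged, so the dependency graph on states $(A,p)$ need not be acyclic. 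I would resolve this by taking a \emph{minimal} witness tree $t$ for $\reach Gp\in\val_\alg(L_A(g))$ and observing, via Lemma~\ref{le:reachability}, that no state $(A_\ad,\port_\ad)$ can repeat along a root-to-leaf path of $t$ (a repetition would let one splice out a strictly smaller witness, since equal port sequences force equal subgraphs). Hence, following $t$, the recursion never re-enters a state that is still pending on the call stack, so the cycle-breaking assignment is never consulted along the witness and an induction on $|t|$ shows the corresponding calls all return \emph{true}.

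For the running time I would first note termination: there are only finitely many states $(A,p)$, each is evaluated at most once thanks to memoisation, and the provisional \emph{false} prevents unbounded re-entrancy. The number of states is $O(|G|^{c})$, since $A$ ranges over the fixed set $N$ and $p$ ranges over $\seq V$ with $|p|\le c$. The work charged to a single state (excluding the memoised recursive calls) is dominated by the extension productions: the ports of $p$ fix the images of the new nodes, so a matching is determined by the preimages of the at most $c$ docks that are context nodes, together with a consistent assignment of the remaining context nodes, which by \ref{node label preservation} and \ref{edge preservation} is forced up to the constantly many nodes of $\und\ext$. Enumerating the $O(|G|^{c})$ candidate dock-preimages $q$ and verifying for each, in $O(|G|)$ time, that a matching realising $q$ exists (checking \ref{node label preservation}--\ref{isolated} and computing the relevant reachable sets) costs $O(|G|^{c+1})$ per state. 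Multiplying by the $O(|G|^{c})$ states yields the claimed $O(|G|^{2c+1})$ bound, the union productions contributing only lower-order terms.

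The main obstacle, as indicated, is the completeness half of the correctness argument: one must argue that the memoised search, despite provisionally recording \emph{false} and thereby cutting cyclic dependencies among states with the same reachable subgraph, never discards a genuine derivation. The minimal-witness-tree argument above is the crux, and its correctness hinges squarely on Lemma~\ref{le:reachability}, which forces equal port sequences to induce identical subgraphs and thereby licenses the splicing that bounds the relevant part of any minimal derivation.
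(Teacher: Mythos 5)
Your running-time analysis coincides with the paper's: $O(|G|^{c})$ memoisation states, $O(|G|^{c})$ candidate preimages of $\dock_\ext$ per extension production, and a linear-time test of whether a given candidate extends to a matching, multiplying out to $O(|G|^{2c+1})$. The one thing you leave unsubstantiated is precisely the part the paper works hardest on: it justifies the linear-time test by computing node \emph{profiles} and comparing multiplicities (exact counts for non-clonable context nodes, lower bounds for clonable ones), whereas you only assert that the remaining assignment is ``forced''; it is not forced, only its existence needs to be certified, and that certification is the content of the profile construction. On correctness the paper offers just one sentence deferring to Lemma~\ref{le:reachability}, the normal form and the preceding discussion, so your detailed invariant for \textsc{Parse\_rec}, and especially your observation that the provisional \emph{false} interacts with cyclic state dependencies, goes beyond what the paper records and puts a finger on a real subtlety.

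However, the step you yourself call the crux does not go through. From ``no state repeats along a root-to-leaf path of a minimal witness $t$'' you infer that, following $t$, the recursion never consults a pending state. This conflates the witness tree with the call tree: the \emph{defining} (first) call to a state $(A,p)$ occurring in $t$ need not be issued along the witness path, because the algorithm explores all productions, and $(A,p)$ may first be reached from an unrelated branch whose call stack already contains a witness-child of $(A,p)$. Concretely, an ``identity-like'' extension $\iota$ with $\New_\iota=\emptyset$ (hence $E_\iota=\emptyset$ and $[\port_\iota]=[\dock_\iota]$, which satisfies \ref{sources} and \ref{forget}) yields a unit-like production that maps a state $(A,p)$ to $(B,p)$; with $A\to\iota[B]$ and $B\to\iota[A]$ the state graph has a genuine cycle. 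If the exploration first enters this cycle at $(B,p)$, tries $B\to\iota[A]$ before $B$'s real witness production, and $(A,p)$'s \emph{only} witness production is $A\to\iota[B]$, then $(A,p)$ is permanently finalised to \emph{false} despite having a witness; a later production that needs $(A,p)$ (say the second disjunct of the top-level nonterminal, reaching $(A,p)$ through a union branch) then fails, and the algorithm answers \emph{false} on a graph in $L(\Gamma)$. Your splicing argument bounds the shape of the witness but says nothing about the order in which defining calls occur, which is what fixes the memoised values; closing the gap needs either a genuine fixed-point iteration over states sharing the same reachable subgraph or a preprocessing step eliminating such unit-like extensions and degenerate unions (neither of which the paper supplies either). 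A smaller problem of the same flavour sits in your soundness direction: for a union production the algorithm never verifies that $\reach G{p_1}$ and $\reach G{p_2}$ are node-disjoint, which your bottom-up assembly of a concrete evaluation tacitly assumes.
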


\begin{proof}
Let $n=|G|$. Correctness should be clear from Lemma~\ref{le:reachability}, the assumption that $\Gamma$ is in normal form (cf.~Lemma~\ref{le:normal form}), and the preceding discussion. Further, by the choice of $c$, there are at most $O(n^c)$ distinct parameters \textsc{Parse\_rec} can be called with. Since we consider a fixed graph extension grammar $\Gamma$, this is a polynomial. For the same reason, the two outer \textbf{for} loops of the algorithm are executed a constant number of times during each call. It remains to consider the loop starting on line~\ref{for loop}. Undoubtedly, the naïve implementation that enumerates all matchings is exponential. However, it is clear from the loop body that we only really need to determine all sequences $d_1\cdots d_k=m^{-1}(\delta_1\cdots\delta_k)$, where $\delta_1\cdots\delta_k=\dock_\ext$ and $m$ is a matching of $\ext$ to $\reach Gp$. (Recall that $m^{-1}$, restricted to $[\dock_\ext]$, is a function by \ref{clone only clonable}.) Since there are no more than $O(n^c)$ such sequences it remains to be shown that, for each of them, it is possible to check in linear time whether it can be extended to a matching. It bears repeating that we, in the paragraphs that follow, are only looking to establish the existence of such an extension, not compute a full set of complete matchings, as that would be too costly.

Clearly, condition~\ref{isolated}, which is independent of the choice of the sequence $d_1\dots d_k$, can be checked in linear time. Thus, it remains to be shown how to verify that the mapping $d_i\mapsto\delta_i$ can be extended to a matching under the assumption that \ref{isolated} is satisfied.

For this, let us use the notation of Case~\ref{parsing ext} in the discussion preceding the theorem (i.e., the definition of matchings following \ref{forget2}). In particular, $E_P=\{e\in E\mid\src e\in\New\}$ and $V_p=[p]\cup\{\tar e\mid e\in E_p\}$. Consider a candidate sequence $d_1\cdots d_k\in V_p^k$. We have to check whether there is a matching $m$ such that $m(d_1\cdots d_k)=\delta_1\cdots\delta_k$. For this, let
\[
\lab'(u)=\left\{\begin{array}{@{}ll@{}}
\lab_\ext(u)&\text{if $u\in V_\ext\setminus[\dock_\ext]$}\\
\lab(d_i)&\text{if $u=\delta_i$ for some $i\in[k]$.}
\end{array}\right.
\]

The idea is now to compute a characterisation, or \emph{profile}, of the nodes in $V_\ext$ and $V_p$ that allows us to quickly decide whether such a matching $m$ exists. In short, we record for each node~$u$  a set of triples, where each triple describes an incoming edge to $u$ that must be covered by the matching. For every node $u\in V_\ext$, let $\profile_\ext(u)=\{(i,\ell,\lab'(u))\mid(u_i,\ell,u)\in E_\ext\}$. Similarly, for $v\in V_p$, let $\profile_G(v)=\{(i,\ell,\lab(v))\mid(v_i,\ell,v)\in E\}$. Figure~\ref{fig:profile} illustrates the construction with an example.

We now combine the profiles for the nodes in $V_\ext$ into a single, composite, profile for all of $\ext$, giving us a set of sets of edge descriptions. For each such edge description, we count how many of its type we need to find. This can be an exact number, or a lower bound, depending on whether some of the target nodes are clonable and can be left out or replicated as desired.
Thus, define $\profiles_\ext=\{\profile_\ext(u)\mid u\in V_\ext\setminus[\port_\ext]\}$ and let, for all $\pi\in\profiles_\ext$,
\[
\mult(\pi)=\left\{\begin{array}{@{}ll@{}}
\equal m & \nexists\, u\in  C_\ext\colon\profile_\ext(u)=\pi\\
\greater m & \text{otherwise}
\end{array}\right.
\]
where $m=|\{u\in V_\ext\setminus([\port_\ext]\cup C_\ext)\mid\profile_\ext(u)=\pi\}|$. Then, by the definition of matchings, there exists a matching $m\colon V_p\to V_\ext$  such that $m(d_1\cdots d_k)=\delta_1\cdots\delta_k$ if and only if
\begin{itemize}
\item the subgraph of $G$ induced by $[p]$ is isomorphic to the subgraph of $\und\ext$ induced by $[\port_\ext]$,
\item for all $i\in[k]$, $\profile_\ext(\delta_i)=\profile_G(d_i)$,\footnote{In fact, because of the previous item this only needs to be checked for $\delta_i\notin[\port_\ext]$.} and
\item for all $\pi\in\profiles_\ext$,
\begin{align*}
|\{v\in V_p\mid\profile(v)=\pi\}| &=k\quad \text{if $\mult(\pi)=\equal k$ for some $k\in\nat$}\\
|\{v\in V_p\mid\profile(v)=\pi\}| &\ge k\quad \text{if $\mult(\pi)=\greater k$ for some $k\in\nat$}\enspace.
\end{align*}
\end{itemize}

It should be clear that these conditions can be checked in linear time.
\end{proof}

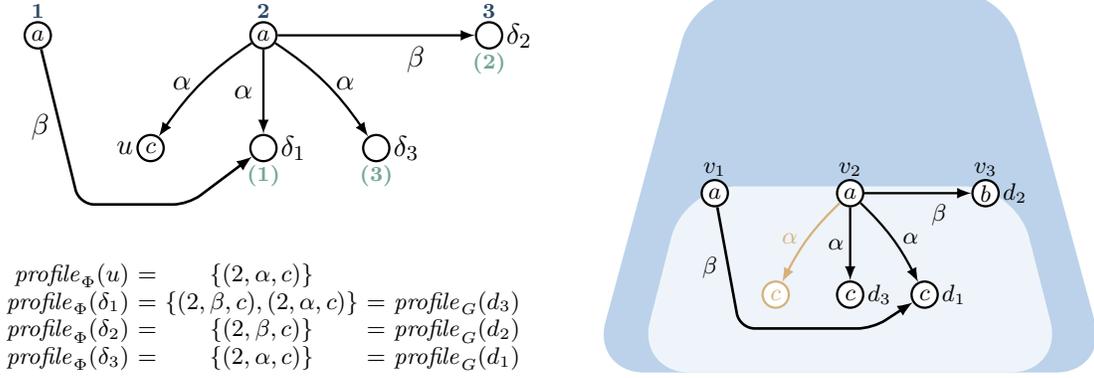
\begin{figure}
    \centering
    \tikzset{
  every label/.style={draw=none, fill=none, inner sep=1pt}
}
 \tikzset{
    sh2n/.style={shift={(0,1)}},
    sh2s/.style={shift={(0,-1)}},
    sh2e/.style={shift={(1,0)}},
    sh2w/.style={shift={(-1,0)}},
    sh2nw/.style={shift={(-1,1)}},
    sh2ne/.style={shift={(1,1)}},
    sh2sw/.style={shift={(-1,-1)}},
    sh2se/.style={shift={(1,-1)}},
      rc/.style={rounded corners=2mm,line width=1pt},
  }

\centering
     \begin{tikzpicture}[xscale=.5, yscale=.75]
     \tikzstyle{overtex}=[circle,line width=0.30mm,fill=white,draw,inner sep=.5pt, minimum size=2ex];
    
        \node at (0.5, 5.5) (lab) {\Large $\ext$:};
    
     \begin{scope}[shift={(0,1.5)}]      
     \node at (0.0,2) [overtex, label=above:\prt 1] (x1) {\footnotesize $a$};
     \node at (6.0,2) [overtex, label=above:\prt 2] (x2) {\footnotesize $a$};
     \node at (12.0,2) [overtex, label=right: $\delta_2$, label=above:\prt 3, label=below:\dk 2] (x3) {};
     
     \node at (3.0,0.0) [overtex, label=left: $u$] (y1) {\footnotesize $c$};
     \node at (6.0,0.0) [overtex, label=right: $\delta_1$, label=below:\dk 1] (y2) {};
     \node at (9.0,0.0) [overtex, label=right: $\delta_3$, label=below:\dk 3] (y3) {};
     
    \node at (2.5, 3) (d) {};
    \node at (9.5, 3) (e) {};
    \draw[->,line width=0.30mm,-latex] (x2) to node[below, pos=.7] {$\beta$} (x3) ;

    \draw[->,line width=0.30mm,-latex,bend right=10] (x2) to node[left, pos=0.5] {$\alpha$}  (y1);
    \draw[->,line width=0.30mm,-latex]  (x2) to node[left, pos=0.5] {$\alpha$} (y2) ;
    \draw[->,line width=0.30mm,-latex,bend left=10] (x2) to node[right, pos=0.5] {$\alpha$}  (y3);
    
    \node at (1.1, -1) (b) {};
    \node at (4.0, -1) (c) {};
    \draw[->,line width=0.30mm,-latex,rc] (x1) -- node[left, pos=0.5] {$\beta$} (b.center) --  (c.center) -- (y2);
    \end{scope}
    
    \node at (6, -1.5) (text) {\footnotesize
        $\begin{array}{r@{\ }c@{\ }c@{\ }c@{\ }l}
        \profile_\ext(u)& = &\{(2,\alpha,c)\}\\    
        \profile_\ext(\delta_1)& = &\{(2,\beta,c),(2,\alpha,c)\}&=&\profile_G(d_3)\\    
        \profile_\ext(\delta_2)& = & \{(2,\beta,c)\} &=&\profile_G(d_2)\\
        \profile_\ext(\delta_3)& = &\{(2,\alpha,c)\}&=&\profile_G(d_1)
        \end{array}$};


    \begin{scope}[shift={(18,-.5)},scale=0.6]
    
        \node at (0.5, 7) (ogx1) {};
    \node at (6.0, 7) (ogx2) {};
    \node at (11.5, 7) (ogx3) {};
    \node at (-4.0, -2) (ogy1) {};
    \node at (6.0, -2) (ogy2) {};
    \node at (16.0, -2) (ogy3) {};

\fill[NoUmUColour!40,rounded corners=7mm]  ($(ogx1.north west)+(-.9,.7)$) -- ($(ogx3.north east)+(.9,.7)$) -- ($(ogy3.south east)+(1,-1)$) -- ($(ogy1.south west)+(-1,-1)$) -- cycle;

    \node at (0.0, 1.2) (ngx1) {};
    \node at (6.0, 1.2) (ngx2) {};
    \node at (12.0, 1.2) (ngx3) {};
    \node at (-2.0, -2) (ngy1) {};
    \node at (6.0, -2) (ngy2) {};
    \node at (14.0, -2) (ngy3) {};
\fill[NoUmUColour!10,rounded corners=7mm]  ($(ngx1.north west)+(-.9,.7)$) -- ($(ngx3.north east)+(.9,.7)$) -- ($(ngy3.south east)+(1,-1)$) -- ($(ngy1.south west)+(-1,-1)$) -- cycle;

    \node at (0.0,2) [overtex, label=above:{\footnotesize $v_1$}] (x1) {\footnotesize $a$};
     \node at (6.0,2) [overtex, label=above:{\footnotesize $v_2$}] (x2) {\footnotesize $a$};
     \node at (12.0,2) [overtex, label=above:{\footnotesize $v_3$},label=right:{\footnotesize $d_2$}] (x3) {\footnotesize $b$};
     
     \node at (2.7,-1) [overtex,draw=UmUGold] (y1) {\textcolor{UmUGold}{\footnotesize $c$}};
     \node at (6.0,-1) [overtex,label=right:{\footnotesize $d_3$}] (y2) {\footnotesize $c$};
     \node at (9.3,-1) [overtex,label=right:{\footnotesize $d_1$}] (y3) {\footnotesize $c$};
     
    \node at (2.5, 3) (d) {};
    \node at (9.5, 3) (e) {};
    \draw[->,line width=0.30mm,-latex] (x2) to node[below, pos=.7] {\footnotesize $\beta$} (x3);
    
    \draw[color=UmUGold,->,line width=0.30mm,-latex,bend right=10] (x2) to node[left, pos=0.5] {\textcolor{UmUGold}{\footnotesize $\alpha$}}  (y1);
    \draw[->,line width=0.30mm,-latex]  (x2) to node[inner sep=2pt, left, pos=0.5] {\footnotesize $\alpha$} (y2) ;
    \draw[->,line width=0.30mm,-latex,bend left=10] (x2) to node[right, pos=0.5] {\footnotesize $\alpha$}  (y3);
    
    \node at (1.1, -2) (b) {};
    \node at (7.0, -2) (c) {};
    \draw[->,line width=0.30mm,-latex,rc] (x1) --  node[left, pos=0.5] {\footnotesize $\beta$} (b.center) -- (c.center) -- (y3);

  \end{scope}

    \end{tikzpicture}

    \caption{On the top-left is a schematic view of an extension operation $\ext$ with three ports and equally many docks. Deciding whether there is a matching $m$ such that $m(d_i)=\delta_i$ for $i\in[3]$ boils down to computing the functions $\profile_\ext$ and $\profile_G$ for all nodes in $V_\ext$ and $V_p$, respectively. (Note that $\profile_\ext(\delta_i)$ depends on $d_i$.) Based on the result, we find that $\reach Gp$ must contain the part drawn in brown for $d_1d_2d_3$ to extend to a matching, but only the existence of such an edge and node is relevant. }
    \label{fig:profile}
\end{figure}

While Algorithm~\ref{alg:parsing} only returns $\mathit{true}$ or $\mathit{false}$, i.e., solves the pure membership problem, we can instead solve the parsing problem by changing the memoïsation to store for every nonterminal $A$ and every $p$ a tree that gives rise to the generated subgraph $\reach Gp$ (or all of those trees).

We note here that the theoretical upper bound $O(|G|^{2c+1})$ proved above is unlikely to be reached in the average practical case, because an efficient implementation would only test candidate sequences $d_1\cdots d_k$ such that $\profile_\ext(\delta_i)=\profile_G(d_i)$ for all $i\in[k]$. 
In fact, there are many useful special cases for which better bounds can be proved. 
Two examples are mentioned in the next theorem, in which we use the following notation similar to the profiles used in the proof above, but restricted to the first and second components: For every node $u\in V_\ext$, we let $\profile'_\ext(u)=\{(i,\ell)\mid(u_i,\ell,u)\in E_\ext\}$ and $\profiles'_\ext=\{\profile'_\ext(u)\mid u\in V_\ext\setminus[\port_\ext]\}$.

\begin{theorem}
Let $\sig$ be $\types$-typed and $c=\max\{\tau\mid \tau\in\types\}$. 
\begin{enumerate}
\item Assume that, for every extension operation $\ext$ of $\alg$ and all $u\in[\dock_\ext]\setminus[\port_\ext]$ and $u'\in C_\ext$, it holds that $\profile'_\ext(u)\neq\profile'_\ext(u')$. Then Algorithm~\ref{alg:parsing} runs in time $O(|G|^{c+1})$.
\item Assume that the following hold:
\begin{enumerate}
\item For every extension operation $\ext$ of $\alg$ and all distinct $u\in[\dock_\ext]\setminus[\port_\ext]$ and $u'\in V_\ext\setminus[\port_\ext]$, we have $\profile'_\ext(u)\neq\profile'_\ext(u')$.
\item For all distinct productions of the form $A\to\ext[B]$ and $A\to\ext'[C]$, where $\ext$ and $\ext'=(V'_\ext,E'_\ext,\lab'_\ext,\dock'_\ext,C'_\ext)$, it holds that
\[
\{\profile'_\ext(u)\mid u\in V_\ext\setminus C_\ext\}\not\subseteq\profiles'_{\ext'}
\text{\quad or\quad}
\{\profile'_{\ext'}(u)\mid u\in V_{\ext'}\setminus C_{\ext'}\}\not\subseteq\profiles'_\ext\enspace.
\]
\item For all $A\in N$, if there is a production of the form $A\to B\union\tau{\tau'}C$ in $P$, then there is no other production with the left-hand side $A$.
\end{enumerate}
Then Algorithm~\ref{alg:parsing} runs uniformly in linear time.
\end{enumerate}
\end{theorem}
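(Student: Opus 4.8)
The plan is to reuse the cost accounting from the proof of Theorem~\ref{th:main}, where the running time factors as the number of distinct arguments $(A,p)$ passed to \textsc{Parse\_rec}, which is $\ordo{n^c}$, times the work per call. There the per-call cost is dominated by ranging over \emph{all} $\ordo{n^c}$ candidate dock sequences $d_1\cdots d_k=m^{-1}(\dock_\ext)$ and checking each in $\ordo{n}$. Both refinements proceed by showing that, under the stated hypotheses, this enumeration collapses; for the second we must moreover eliminate all branching and sharpen the per-call estimate to amortised $\ordo{1}$.

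For the first claim I would fix a call $(A,p)$ and a production $A\to\ext[B]$ and use the notation of Case~\ref{parsing ext}, writing $\profile'_G(v)=\{(i,\ell)\mid(v_i,\ell,v)\in E\}$ for the restriction of the profile on $G$ (with $p=v_1\cdots v_k$). Port-docks in $[\dock_\ext]\cap[\port_\ext]$ are fixed by their position in $p$. For a forgotten dock $u\in[\dock_\ext]\setminus[\port_\ext]$, requirement~\ref{forget} forces $u$ to carry incoming edges, so its preimage is a target $v\in V_p$ with $\profile'_G(v)=\profile'_\ext(u)$, and bidirectional edge preservation~\ref{edge preservation} forces $m(v)$ to share that restricted profile. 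The hypothesis $\profile'_\ext(u)\neq\profile'_\ext(u')$ for all $u'\in C_\ext$ then rules out $m(v)\in C_\ext$, so any target whose profile equals a forgotten-dock profile maps to a \emph{non-clonable} node. Since surjectivity together with~\ref{clone only clonable} makes $m$ a bijection on non-clonable nodes, the targets carrying a given forgotten-dock profile are in bijection with the (constantly many) non-clonable $\ext$-nodes of that profile; hence their number is a fixed constant, and the number of admissible dock sequences is bounded solely in terms of $\ext$. After an $\ordo{n}$ computation of the profiles of all nodes in $V_p$, these $\ordo{1}$ dock sequences are read off and each is completed to a matching check in $\ordo{n}$ as in Theorem~\ref{th:main}. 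The per-call cost thus drops to $\ordo{n}$, and the total to $\ordo{n^c}\cdot\ordo{n}=\ordo{n^{c+1}}$.

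For the linear bound the aim is to make the whole recursion deterministic. Assumption~(a) sharpens the above: each forgotten dock now has a profile distinct from every other non-port node, so its preimage is the \emph{unique} target with that profile and the dock sequence is determined with no permutation freedom. Assumption~(b) removes branching over extension productions: if two distinct productions $A\to\ext[B]$ and $A\to\ext'[C]$ both admitted matchings into $\reach Gp$, then the non-clonable profiles of each would occur among the node profiles of the other, giving $\{\profile'_\ext(u)\mid u\in V_\ext\setminus C_\ext\}\subseteq\profiles'_{\ext'}$ together with the symmetric inclusion, contradicting~(b); so at most one extension production can fire, identified by inspecting the profiles present. Assumption~(c) makes the union case deterministic, since a union production is then the only production for its left-hand side and the split $p=p_1p_2$ is dictated by the types. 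Consequently \textsc{Parse\_rec} traces a single derivation tree rather than a search space.

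What remains, and is the genuine obstacle, is turning this determinism into a linear \emph{time} bound rather than merely $\ordo{n^{c+1}}$, since a naïve $\ordo{n}$ per call over an $\ordo{n^c}$-sized parameter space would not suffice. Here I would charge the matching work of each extension application to the structure it introduces: by requirement~\ref{sources} every edge of $G$ has its source among the new ports ($\New$) of exactly one extension application, so the edges inspected while testing the extension at one derivation node are disjoint from those inspected at any other. Summing the local profile computations, the multiplicity test of~\ref{isolated}, and the now-unique dock determination over the derivation tree therefore telescopes to $\ordo{|E|+|V|}=\ordo{|G|}$, with the $\ordo{|G|}$ union nodes contributing constant work each. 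The crux is precisely this amortisation: one must verify that the derivation tree has size $\ordo{|G|}$ and that no node or edge of $G$ is examined by more than a bounded number of calls, so that the global bound replaces the per-call estimate. Once the locality of the profile bookkeeping is established, uniform linearity follows.
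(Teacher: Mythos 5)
Your proposal is correct and follows essentially the same route as the paper's proof: the profile-distinctness hypotheses bound the number of candidate dock sequences $d_1\cdots d_k$ by a constant (part~1) or force uniqueness together with determinism across productions (part~2). You actually elaborate two points the paper leaves terse — why the bijection between targets and non-clonable nodes yields the constant bound, and the amortisation over edges via requirement~\ref{sources} that turns "each step taking constant time" into a genuine linear bound — but these are refinements of the same argument, not a different approach.
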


\begin{proof}
The assumption of the first statement implies that only a constant number of sequences $d_1\cdots d_k$ in the proof of Theorem~\ref{th:main} can fulfil the requirement $\profile_\ext(\delta_i)=\profile_G(d_i)$ for all $i\in[k]$. Hence, the \textbf{for} loop starting on line~\ref{for loop} can be implemented to run in linear time, which proves the statement.

For the second statement, note that the first assumption, which strengthens the assumption of the first part of the theorem, implies that at most one candidate sequence $d_1\cdots d_k$ can fulfil the requirement that $\profile_\ext(\delta_i)=\profile_G(d_i)$ for all $i\in[k]$. Moreover, by the second assumption, if there is such a sequence for a production $A\to\ext[B]$ considered in the loop starting on line~\ref{outer2}, then all other productions to be considered in that loop lack such a sequence. Finally, the third assumption makes sure that at most one production is to be considered on line~\ref{outer1}, and if there is one, then there is no production of the form required on line~\ref{outer2}.

It follows that the algorithm will decompose $G$ in a deterministic top-down process, each step taking constant time. This proves the assertion.
\end{proof}

\section{Conclusion}

We have introduced a graph grammar formalism that is both a restriction and an extension of hyperedge-replacement graph grammars, namely graph extension grammars.
These grammars are formalised in a tree-based fashion, consisting of an algebra over graphs and a regular tree grammar that generates expressions over the operations of the algebra. 
The graphs are directed graphs with ports, constructed using operations of two kinds: 
The first is disjoint union, and the second is a family of unary operations that add new nodes and edges to an existing graph~$G$. The augmentation is done in such a way that all new edges lead from a new node to either
\begin{enumerate}[label=(\roman*)]
\item a port in~$G$,
\item an arbitrary node in~$G$, chosen only by its node label, or
\item any number of such nodes.
\end{enumerate}
Graph extension grammars are essentially a restriction of contextual hyperedge-replacement grammars~\cite{drewes:2012contextual}. 
By way of example, we have shown that they can model both the structural and non-structural reentrancies that are common in semantic graphs of formalisms such as AMR. 
We have provided a parsing algorithm for our formalism and proved it to be correct and have a polynomial-time running time. Finally, we have formulated conditions on the grammar that, if met, make parsing particularly efficient.

There are several promising directions for future work. 
On the theoretical side, we would like to apply the new ideas presented here to the formalisms by Groschwitz et al.~\cite{groschwitz-etal-2017-constrained} and Björklund et al.~\cite{bjorklund:2016} to see if also these can be made to accommodate contextual rules without sacrificing parsing efficiency. 
It is also natural to generalise graph extension grammars to string-to-graph or tree-to-graph transducers to facilitate translation from natural-language sentences or dependency trees to AMR graphs. 
On the empirical side, we are interested in algorithms for inferring extension and union operations from AMR corpora, and in training neural networks to translate between sentences and AMR graphs using trees over graph extension algebras as an intermediate representation. 
Such efforts would make the new formalism available to current data-driven approaches in NLP, with the aim of adding structure and interpretability to machine-learning workflows.

\bibliographystyle{plain}
\bibliography{main}

\end{document}